\theoremstyle{plain}
\newtheorem{proposition}[theorem]{Proposition}
\theoremstyle{remark}
\newtheorem{claim}[theorem]{Claim}
\newtheorem{remark}[theorem]{Remark}
\newcommand{\BPP}{\mathrm{BPP}}
\newcommand{\ZPP}{\mathrm{ZPP}}
\newcommand{\NP}{\mathrm{NP}}
\newcommand{\PNP}{\PTIME^\NP}
\newcommand{\Ppoly}{\mathrm{P/poly}}
\newcommand{\PSPACE}{\mathrm{PSPACE}}
\newcommand{\N}{\mathbb{N}}
\newcommand{\co}{\mathrm{co}}
\newcommand{\io}[1]{\mathrm{i.o}\mathchar`-#1}
\newcommand{\PTIME}{\mathrm{P}}
\newcommand{\EXP}{\mathrm{EXP}}
\newcommand{\EXPNP}{\EXP^\NP}
\newcommand{\MIP}{\mathrm{MIP}}
\newcommand{\NEXP}{\mathrm{NEXP}}
\newcommand{\BPPlog}{\BPP \doubleslash \log}
\newcommand{\PH}[1]{\mathrm{\Sigma}_{#1}^{\mathrm{p}}}
\newcommand{\coPH}[1]{\mathrm{\Pi}_{#1}^{\mathrm{p}}}
\newcommand{\SH}[1]{\mathrm{S}_{#1}^{\mathrm{p}}}
\newcommand{\SHEXP}{\mathrm{S}_{2}^{\mathrm{exp}}}
\newcommand{\reduction}[2][p]{\le^{#1}_{#2}}
\def\binset{\{0, 1\}}
\def\binstr{\binset^*}
\newcommand{\NumSet}[1]{\{1, \cdots, #1\}}
\newcommand{\ESet}[1]{\{\,#1\,\}}
\newcommand{\ISet}[2]{\{\,#1 \mid #2 \,\}} 
\newcommand{\supplement}[1]{\ \left(\,#1\,\right)}
\def\doubleslash{/\mkern-4mu/}
\def\eg{{\it e.g.\ }}
\newcommand{\etal}[1]{#1 {\it et al.}}
\def\ie{{\it i.e.\ }}
\let\subset\undefined
\let\subset\subseteq
\let\phi\undefined
\let\phi\varphi
\newcommand\restr[2]{#1|_{#2}}
\def\pa#1{v_#1}
\def\stage#1{^{(#1)}}
\def\binput{b_\mathrm{in}}
\def\lexass{{V_\Phi}}
\newcommand{\satinstance}[2]{F_{#1}(#2)}
\newcommand{\extass}{{\hat f_1}}
\title{Identifying an Honest $\EXPNP$ Oracle Among Many}
\author{Shuichi Hirahara}
\affil{Department of Computer Science, The University of Tokyo\\
  7-3-1 Hongo, Bunkyo-ku, Tokyo 133-8654, Japan\\
  \texttt{hirahara@is.s.u-tokyo.ac.jp}}
\authorrunning{S. Hirahara} 
\subjclass{F.1.1 Models of Computation; F.1.2 Modes of Computation; F.1.3 Complexity Measures and Classes}
\keywords{nonuniform complexity, short advice, instance checker, interactive proof systems, probabilistic checkable proofs}
\begin{document}

\maketitle

\begin{abstract}
  We provide a general framework to remove short advice
  by formulating the following computational task for a function $f$:
  given two oracles at least one of which is honest (\ie correctly computes $f$ on all inputs)
  as well as an input,
  the task is to compute $f$ on the input with the help of the oracles by a probabilistic polynomial-time machine,
  which we shall call a \emph{selector}.
  We characterize the languages for which short advice can be removed by the notion of selector:
  a paddable language has a selector if and only if
  short advice of a probabilistic machine that accepts the language can be removed under any relativized world.

  Previously, instance checkers have served as a useful tool to remove short advice of probabilistic computation.
  We indicate that existence of instance checkers is a property stronger than that of removing short advice:
  although no instance checker for $\EXPNP$-complete languages exists unless $\EXPNP = \NEXP$,
  we prove that there exists a selector for any $\EXPNP$-complete language, by building on the proof of $\MIP = \NEXP$ by Babai, Fortnow, and Lund (1991).
\end{abstract}

\section{Introduction}
Blum and Kannan~\cite{BK95} introduced the notion of \emph{instance checker}.
Roughly speaking,
an instance checker for a function $f$ is an efficient probabilistic machine that,
given access to an oracle,
checks if the oracle computes $f(x)$ correctly on a given instance $x$;
the oracle models a possibly buggy program that purports to compute $f$,
and an instance checker verifies whether the program works correctly on a given instance.

The notion of instance checker is intimately related
to interactive proof systems:
the line of work showing the power of interactive proofs~\cite{LFKN92,Sha92,BFL91}
yielded
instance checkers for $\PTIME^{\#\PTIME}$-, $\PSPACE$-, and $\EXP$-complete languages;
in addition,
Blum and Kannan~\cite{BK95} gave a characterization of the languages with an instance checker 
by a function-restricted interactive proof system.
Since any language with an interactive proof protocol is in $\NEXP$~\cite{FRS94},
any language with an instance checker must be in $\NEXP \cap \co\NEXP$.

In this paper,
we investigate a computational task weaker than instance checking of a (Boolean) function $f$:
we are given access to two oracles (instead of a single oracle) as well as an input $x$;
again, both of the oracles purport to compute $f$;
however,
it is assumed that at least one of the two oracles is \emph{honest}, \ie computes $f(q)$ correctly on all inputs $q$;
and the task is to compute $f(x)$ with the help of the oracles in polynomial time.
We shall call a probabilistic machine doing the task a \emph{(probabilistic) selector} for $f$.

If the answers of oracles on the input $x$ agree, then we have only to output the answer,
which is surely correct by the assumption.
Thus, the task of a selector is essentially to identify the honest oracle
when two oracles disagree on $x$ (\ie
one of the oracles asserts that $f(x) = 0$, whereas the other asserts that $f(x) = 1$).

Our main result shows that there exists a selector for $\EXPNP$-complete languages.
We also show that the notion of selector does not change even if there are one honest oracle and polynomially many dishonest oracles.
Thus, these results can be encapsulated in the following phrase:
``identifying an honest oracle among many is \emph{strictly} weaker than instance checking unless $\EXPNP = \NEXP$.''

Although the task is weaker than instance checking,
a situation in which one may assume existence of an honest oracle naturally arises
out of computation with advice:
Suppose, for example, that a (paddable) language $L$ is computed by a probabilistic machine $M$ with advice of one bit.
We regard $M$ with advice $0$ and $1$ as two oracles $A_0$ and $A_1$, respectively.
By the definition of advice, either $A_0$ or $A_1$ is honest on all the inputs (of the same length).
Thus, the advice of one bit can be removed if $L$ has a selector.
We can in fact remove advice of size $O(\log n)$,
since a selector can identify an honest oracle among polynomially many oracles.

\subsection{Removing Short Advice for Probabilistic Computation}

In early work as to removing short advice for probabilistic computation,
Trevisan and Vadhan~\cite{TV07}
gave an insight into the potential of instance checkability:
they demonstrated that
instance checkability can be exploited to
remove short advice.
Based on the existence of an instance checker for $\EXP$-complete languages,
they showed 
a quantitative tradeoff from a uniform worst-case-hardness assumption (\ie $\EXP \not\subset {\rm BPTIME}(t(n^{O(1)}))$) to
average-case hardness of $\EXP$ (\ie $\EXP$ contains languages that cannot be solved by probabilistic computation on a fraction better than $\frac12 + \frac{1}{t}$ of inputs in time $t$).

They also argued that
their result cannot be obtained via \emph{black-box uniform reductions}.
Typical constructions of a worst-case to average-case connection are
based on the following scheme:
we convert a function $f$ into another function $f'$, which is an error-correcting code of $f$;
if we have a ``black-box'' algorithm that computes $f'$ on a fraction greater than $\frac12 + \epsilon$ of the inputs,
then a probabilistic machine that takes advice can compute $f$ on all inputs by decoding $f'$.
Since it is impossible to uniquely decode $f'$ for small $\epsilon$,
the advice is used to identify $f$ and is provably indispensable.

Indeed, it was the instance checkability of $\EXP$-complete languages
that broke the black-box construction in the proof of Trevisan and Vadhan;
the instance checkability enabled them to remove advice of logarithmic size.
Therefore,
it will be helpful for future research to closely understand the property that they actually exploited.

Subsequent to their work,
instance checkability has since been exploited to cope with short advice for probabilistic computation:
for example,
Barak~\cite{Bar02} proved the first hierarchy theorem for probabilistic computation with short advice;
Buhrman, Fortnow, and Santhanam~\cite{BFS09}
unconditionally separated 
${\rm BPEXP}$ from $\BPP$ with advice of subpolynomial size;
and
Buhrman, Fortnow, Kouck\'{y}, and Loff~\cite{BFKL10} gave some evidences that
a deterministic efficient computation with oracle access to the set of Kolmogorov-random strings
can be simulated by
a probabilistic efficient computation.

\subsection{Our Results}

In fact, the notion of selector captures a property of removing short advice:
\begin{theorem}
  \label{thm:characterizationelim}
  Let $L$ be an arbitrary paddable language.
  The following are equivalent:
  \begin{enumerate}
    \item
      There exists a selector for $L$.
    \item
      For any oracle $R \subset \binstr$,
      it holds that
      $L \in \BPP^R \doubleslash \log$ implies $L \in \BPP^R$.
  \end{enumerate}
\end{theorem}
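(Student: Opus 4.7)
The plan is to prove the two directions separately: $(1)\Rightarrow(2)$ uses the selector together with amplification to carry out advice removal directly, while $(2)\Rightarrow(1)$ argues the contrapositive by diagonalizing against the nonexistence of a selector to build a bad oracle.

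For $(1)\Rightarrow(2)$, suppose $S$ is a selector for $L$ and $L \in \BPP^R \doubleslash c\log n$ via a PPT machine $M$. On each input length $n$ there is some ``good'' advice $\alpha^{*}\in\binset^{c\log n}$ that makes $M^R(\cdot,\alpha^{*})$ correct on all length-$n$ inputs, so we are effectively handed $k:=n^{c}$ candidate oracles, at least one of which is honest. First I would amplify $M^R(\cdot,\alpha)$ so that every evaluation, whenever $\alpha$ is good, is correct with probability exponentially close to $1$; union-bounding over all subsequent queries, we may treat the amplified computation $A_{\alpha}^R(\cdot)$ as a genuine deterministic oracle with $A_{\alpha^{*}}^R = L$ on every length-$n$ input. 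On input $x$, the advice-free algorithm computes $b_{\alpha,\beta} := S^{A_{\alpha}^R,A_{\beta}^R}(x)$ (again amplified) for every pair $(\alpha,\beta)$, searches for some $\alpha$ with $b_{\alpha,\beta} = A_{\alpha}^R(x)$ for all $\beta$, and outputs $A_{\alpha}^R(x)$. Such an $\alpha$ exists because $\alpha = \alpha^{*}$ passes (every pair involving $\alpha^{*}$ is legal, so the selector returns $L(x) = A_{\alpha^{*}}^R(x)$), and any passing $\alpha$ satisfies $A_{\alpha}^R(x) = L(x)$, since taking $\beta = \alpha^{*}$ forces $A_{\alpha}^R(x) = b_{\alpha,\alpha^{*}} = L(x)$.

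For $(2)\Rightarrow(1)$, I would argue the contrapositive: assuming no selector exists, construct $R$ with $L \in \BPP^R \doubleslash 1$ but $L \notin \BPP^R$, contradicting $(2)$. Build $R$ stage by stage against an enumeration $M_{1},M_{2},\ldots$ of PPT oracle machines. At stage $i$, the portion $R_{<i}$ on earlier ``slabs'' of query-lengths is already fixed; combine $M_{i}$ with the hard-coded $R_{<i}$ and a padding scheme (using paddability of $L$) that routes $M_{i}$'s queries on length-$n_{i}$ inputs into the $i$-th slab, yielding a PPT candidate selector whose effective oracles are the slab-$i$ encoding of some pair $(A_{0},A_{1})$. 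The no-selector hypothesis applied to this candidate supplies a failing input $x_{i}$ (WLOG of length $n_{i}$, by paddability) and a legal pair $(A_{0}^{(i)},A_{1}^{(i)})$; extend $R$ to slab $i$ by encoding this pair's restriction there, with the honest side of the pair forced to equal $L$ on the slab. A single advice bit per length identifies the honest encoded oracle on the slab containing that length, witnessing $L \in \BPP^R \doubleslash 1$; by construction, $M_{i}^R$ on $x_{i}$ reproduces the failing computation of the stage-$i$ candidate, witnessing $L \notin \BPP^R$.

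The main obstacle is the $(2)\Rightarrow(1)$ direction, where the stage-wise diagonalization must reconcile three constraints: (a) different stages must not interfere (handled by choosing $\{n_{i}\}$ so sparsely that the query-length slabs are pairwise disjoint), (b) the globally encoded pair must remain legal at every length so that $L \in \BPP^R \doubleslash 1$ genuinely holds (ensured by arranging that one of the two encoded oracles agrees with $L$ on every slab), and (c) the counterexample that the no-selector hypothesis provides must be effectively localizable to the current slab (addressed by diagonalizing against the padded-and-hard-coded candidate rather than against $M_{i}$ directly). Paddability of $L$ is used crucially to place each diagonalization target at a fresh, well-separated input length.
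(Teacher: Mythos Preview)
Your overall architecture matches the paper's: a selector-based tournament for $(1)\Rightarrow(2)$ and a stagewise diagonalization for $(2)\Rightarrow(1)$. Your pairwise tournament (``find an $\alpha$ that wins against every $\beta$'') is a perfectly valid alternative to the knockout elimination the paper packages as Lemma~\ref{lemma:robustmany}. However, you have the role of paddability inverted, and this creates a genuine gap.

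In $(1)\Rightarrow(2)$ you assert that $A_{\alpha^{*}}^{R}=L$ on every length-$n$ input and then run $S^{A_{\alpha}^{R},A_{\beta}^{R}}(x)$. But the selector $S$ is under no obligation to query only at length~$n$; it may ask about strings of many different lengths, and the ``good'' advice for length $m$ need not coincide with the good advice for length $n$. Consequently $A_{\alpha^{*}}^{R}$ is not an honest oracle for $S$ in the sense required, and your correctness argument for both ``$\alpha^{*}$ passes'' and ``any passing $\alpha$ gives $L(x)$'' breaks. The paper fixes exactly this by using paddability of $L$ to force the selector to issue all queries at a single length $l(n)$, and then taking advice (hence the candidate oracles) for that one length. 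This is where the paddability hypothesis is actually consumed. A secondary point: in the $\doubleslash$ model the advice depends on the random string $r$, so ``amplify $M^{R}(\cdot,\alpha)$'' is not quite the right move; the paper instead draws a single $r$, notes that with probability at least $5/6$ some $\alpha(r)$ makes the now-deterministic map $q\mapsto M^{R}(q,r,\alpha(r))$ equal to $L$ on the relevant length, and runs the tournament over the deterministic oracles $A_{i}(q)=M^{R}(q,r,i)$.

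For $(2)\Rightarrow(1)$, paddability is not needed at all, and your use of it to force $|x_{i}|=n_{i}$ is both unnecessary and not obviously justified (padding lets you lengthen inputs, not pin them to a prescribed length). The paper simply takes the two encoded oracles $R_{0},R_{1}$ (so $R(i\cdot q)=R_{i}(q)$), and at stage $e$ forms the candidate selector that runs $M_{e}$ with queries of length $\le l^{(e-1)}$ answered from a hard-coded table and longer queries passed to the given $(A_{0},A_{1})$. If this candidate were a selector we would be done; otherwise the failing triple $(x^{(e)},A_{0},A_{1})$ exists with no restriction on $|x^{(e)}|$, and one extends $R_{0},R_{1}$ on the new slab by $A_{0},A_{1}$. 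Your slab-routing and padding machinery can be dropped entirely.
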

That is, a paddable language has a selector if and only if 
short advice can be removed under any relativized world.
($``\doubleslash''$ means advice that can depend on coin flips of probabilistic machines as well as input length~\cite{TV07}.)

In addition, we construct a selector for $\EXPNP$-complete languages,
thereby indicating an essential difference between selectors and instance checkers.
We also give an upper bound on the languages with a selector:
\begin{theorem}
  [Main Theorem]
  \label{thm:honesty} \hspace{0em}
  \begin{enumerate}
    \item
      \label{enum:honestylower}
      Every $\EXPNP$-complete language has a selector.
    \item
      \label{enum:honestyupper}
      Any language with a selector is in $\SHEXP$
      (which is an exponential-time analogue of $\SH2$).
  \end{enumerate}
\end{theorem}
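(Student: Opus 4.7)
The plan addresses the two parts in turn.

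For the upper bound (part~\ref{enum:honestyupper}), I would convert a selector $S$ for $L$, say running in time $n^c$, into an $\SHEXP$-verifier whose two (exponential-length) witnesses $y, z$ encode the truth tables of candidate oracles on all inputs of length at most $n^c$. The verifier $V(x, y, z)$ deterministically runs $S(x)$ with $A_0 := y$ and $A_1 := z$ for every coin string and outputs the majority answer; this takes exponential time. For $x \in L$, taking $y$ to be the truth table of $L$ on these lengths makes $A_0$ honest for every $z$, so $V(x, y, z) = 1$ for all $z$; the case $x \notin L$ is symmetric with $z$ playing the role of the honest oracle. This matches the $\SHEXP$ definition directly.

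For the lower bound (part~\ref{enum:honestylower}), fix an $\EXPNP$-complete $L$ decided by a deterministic $\EXP$-machine $M$ with oracle $\mathrm{SAT}$. The claim ``$M^{\mathrm{SAT}}(x) = b$'' is an $\NEXP^\NP$-statement witnessed by the accepting tableau, so relativizing the BFL proof of $\MIP = \NEXP$ yields a probabilistic polynomial-time PCP-verifier with oracle access to a proof $\pi$ and to $\NP$, having perfect completeness for the true value of $b$ and high-probability soundness against every $\pi$ whenever the $\NP$-oracle answers truthfully. Since $L$ is $\EXPNP$-complete, and both the honest $\pi$ and the lex-smallest satisfying assignment of any formula $q$ are computable in $\EXPNP$, every proof-bit query and every assignment-bit query reduces in polynomial time to a single $L$-query.

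The selector then works as follows: if $A_0(x) = A_1(x)$, output the common value; otherwise, for each $b \in \{0, 1\}$ run the PCP-verifier on ``$M^{\mathrm{SAT}}(x) = b$'', using $A_b$ (through the reduction above) to answer proof-bit queries, and resolving each $\NP$-query ``$q \in \mathrm{SAT}$?'' by asking both $A_0$ and $A_1$ (again through the reduction) for the bits of the lex-smallest satisfying assignment of $q$ and declaring $q \in \mathrm{SAT}$ iff at least one of the two reconstructed strings actually satisfies $q$. Output the (unique) $b$ whose run accepts. The honest oracle delivers a genuine assignment whenever one exists, and when $q$ is unsatisfiable no reconstructed string can possibly satisfy it, so the simulated $\NP$-oracle is always correct regardless of which of $A_0, A_1$ is honest; completeness and soundness of the PCP then yield the selector's correctness.

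The hard part is precisely this $\NP$-oracle simulation step: a dishonest oracle could otherwise lie about ``no''-answers to $\NP$-queries, and it is exactly the $\NEXP$-unverifiability of such negative answers that blocks $\EXPNP$ from having instance checkers. My plan resolves this by exploiting that the oracles are committed fixed functions rather than adaptive provers, combined with $\NP$'s witness-verifiability via the $\EXPNP$-search reduction, so that the truthful answer can always be recovered from whichever of $A_0, A_1$ happens to be honest.
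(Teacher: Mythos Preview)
Your upper-bound argument (Part~\ref{enum:honestyupper}) is correct and coincides with the paper's proof.

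Your lower-bound argument (Part~\ref{enum:honestylower}) has a genuine gap. The assertion that ``relativizing the BFL proof'' yields a probabilistic polynomial-time verifier with $\NP$-oracle access and proof-oracle access deciding ``$M^{\mathrm{SAT}}(x)=b$'' is not justified, and indeed $\NEXP\subseteq\MIP$ is a prototypical non-relativizing statement. Concretely, the BFL verifier must arithmetize the local-constraint formula $\Phi$; once $\Phi$ contains SAT-oracle gates, arithmetization forces evaluations of the low-degree extension $\widetilde{\mathrm{SAT}}$ at non-Boolean points, which a polynomial-time machine with only a Boolean $\NP$-oracle cannot perform. Put differently: the SAT queries written on the tape of an $\EXP^{\NP}$ machine have length exponential in $|x|$, so neither the verifier's own $\NP$-oracle nor your witness-reconstruction trick (``ask each $A_i$ for the bits of a satisfying assignment of $q$ and plug it in'') can be executed in polynomial time on them. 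Your trick is sound for polynomial-length $q$, but restricting $M$ to polynomial-length SAT queries collapses its power to $\EXP$, where ordinary instance checkers already exist.

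The paper circumvents this obstacle by a different route. It works with the lexicographically-maximum oracle-3-satisfying-assignment problem and exploits the competition between the two oracles at the level of the \emph{whole assignment} rather than at the level of individual oracle calls. Each $A_i$ is asked for the multilinear extension of its claimed lex-max assignment; after a multilinearity test and self-correction, a binary search using polynomial identity testing locates the first coordinate where the two multilinear functions disagree, revealing which claimed assignment is lexicographically larger; then only the larger one is checked for satisfiability via the BFL sum-check. The unverifiable ``no'' information --- that no larger satisfying assignment exists --- is never checked at all: it is supplied implicitly by whichever oracle is honest, exactly as in the $\PNP$ selector of Theorem~\ref{thm:honestytt}. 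It is precisely this use of the two-oracle competition, rather than an (unavailable) $\NP$-relativized instance checker, that makes the construction go through.
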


Thus, existence of an instance checker is a property stronger than that of removing short advice (or, equivalently, existence of a selector):
although no instance checker for $\EXPNP$-complete languages exists unless $\EXPNP = \NEXP$,
short advice of a probabilistic machine that accepts $\EXPNP$-complete languages can be removed.

\subsubsection*{Our Techniques}
The most technical part of this paper is a proof of the main theorem (Theorem~\ref{thm:honesty}, Part~\ref{enum:honestylower}).
In order to construct a selector for $\EXPNP$-complete languages,
we build on the proof of $\MIP = \NEXP$ by Babai, Fortnow, and Lund~\cite{BFL91}.
As pointed out by G\'abor Tardos in the paper~\cite{BFL91},
the complexity of honest provers of the interactive proof system for $\NEXP$-complete languages can be bounded above by $\EXPNP$.
We crucially use this fact to check satisfiability of an exponential-sized formula with the help of an $\EXPNP$-complete oracle.
We also compare two exponential-sized strings by performing a binary search.

Thanks to plenty of machinery that has been cultivated together with interactive proof systems, program checking, and PCPs,
we can prove the main theorem by careful combinations of such machinery.
For example,
we exploit a multilinearity test~\cite{BFL91} and the self-correction of low-degree polynomials~\cite{BF90,Lip91}.

Due to the usage of arithmetization, we suspect that our proof of the main theorem does algebrize~\cite{AW09} but does not relativize.

\subsubsection*{Variants of Selectors}

We also investigate other variants of selectors:
a \emph{deterministic selector} and a \emph{nonadaptive deterministic selector}.
We focus on the ``suprema'' of the languages with a selector, namely, upper bounds on these languages
and existence of a selector for languages complete for a complexity class that is close to the upper bounds.
(Note that the languages with a selector are not necessarily closed downward.
For example, although $\NEXP \subset \EXPNP$,
we do not know whether $\NEXP$-complete languages have a selector or not.)

For a nonadaptive deterministic selector,
we prove polynomial-time analogues of Theorem~\ref{thm:honesty}:
\begin{theorem}
  \label{thm:honestytt} \hspace{0em}
  \begin{enumerate}
    \item
      \label{enum:honestyttlower}
      Every $\PNP$-complete language has a nonadaptive deterministic selector.
    \item
      \label{enum:honestyttupper}
      Any language with a nonadaptive deterministic selector is in $\SH2$.
  \end{enumerate}
\end{theorem}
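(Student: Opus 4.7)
For Part~\ref{enum:honestyttupper}, the argument is a direct repackaging of the selector as an $\SH2$ game. Let $S$ be a polynomial-time nonadaptive deterministic selector for $L$. Since $S$ is nonadaptive and deterministic, the set of queries $Q = \{q_1, \ldots, q_k\}$ that $S$ issues on input $x$ is determined by $x$ alone, so $S$ can be regarded as a polynomial-time function $S(x, \alpha, \beta)$ of two bit-vectors $\alpha, \beta \in \binset^k$ that encode the answers of its two oracles on $Q$. The selector property forces $S(x, \restr{L}{Q}, \beta) = S(x, \alpha, \restr{L}{Q}) = L(x)$ for every $\alpha, \beta$. Defining $V(x, y, z) := S(x, y, z)$, the yes-prover can force $V = 1$ by playing $y = \restr{L}{Q}$ whenever $x \in L$, and the no-prover can force $V = 0$ by playing $z = \restr{L}{Q}$ whenever $x \notin L$. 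This is exactly the definition of $\SH2$, so $L \in \SH2$.

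For Part~\ref{enum:honestyttlower}, I would start from the canonical $\PNP$-complete language
\[
  L = \ISet{(\varphi, i)}{\varphi \text{ is a satisfiable CNF and the } i\text{-th bit of its lex-smallest satisfying assignment is } 1},
\]
and build a nonadaptive deterministic selector as follows. Given oracles $A_0, A_1$ and input $x = (\varphi, i)$ with $\varphi$ on $n$ variables, the selector nonadaptively queries both oracles on $(\varphi, 1), \ldots, (\varphi, n)$ to reconstruct candidate assignments $a_0, a_1 \in \binset^n$ and deterministically tests whether each $a_b$ satisfies $\varphi$. Case-analyze: if both $a_0, a_1$ satisfy $\varphi$, output the $i$-th bit of the lexicographically smaller of $a_0, a_1$; if exactly one, say $a_b$, satisfies, output $a_b[i]$; if neither satisfies, output $0$. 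Each case outputs $L(x)$: an honest oracle on satisfiable $\varphi$ must return the lex-smallest satisfying assignment (which is $\le$ any competing satisfying assignment), and on unsatisfiable $\varphi$ must return $0^n$ (which does not satisfy $\varphi$). Lifting to an arbitrary $\PNP$-complete $L'$ is then routine: the many-one reductions $L' \reduction{m} L$ and $L \reduction{m} L'$ guaranteed by $\PNP$-completeness convert honest $L'$-oracles into honest $L$-oracles and reduce the selector task for $L'$ to a single call of the selector for $L$, preserving both determinism and nonadaptivity.

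The principal technical point is verifying the ``both satisfy'' branch of Part~\ref{enum:honestyttlower}: if $A_b$ is honest then $a_b$ is the lex-smallest satisfying assignment of $\varphi$, so any competing satisfying $a_{1-b}$ satisfies $a_b \le a_{1-b}$ and $\min(a_0, a_1) = a_b$ reproduces the honest oracle's answer; the remaining branches are immediate from the same observation. The upper bound in Part~\ref{enum:honestyttupper} is a purely syntactic game-formulation and needs no separate argument.
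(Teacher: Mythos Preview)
Your proposal is correct and follows essentially the same approach as the paper. For Part~\ref{enum:honestyttupper} your argument is identical to the paper's; for Part~\ref{enum:honestyttlower} the paper uses the lexicographically \emph{maximum} satisfying assignment (Krentel) and checks only whether the larger of the two candidate assignments satisfies $\varphi$, whereas you use the lex-\emph{minimum} variant and test both candidates---these are symmetric formulations of the same idea, and the lifting to arbitrary $\PNP$-complete languages is handled in the paper via the closure of nonadaptive selectors under truth-table equivalence, which your many-one reduction sketch instantiates.
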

The proofs of this theorem will clearly illustrate the basic ideas for Theorem~\ref{thm:honesty}.

Notice that
$\PNP$ is close to the upper bound $\SH2$
since $\PNP \subset \SH2 \subset \ZPP^\NP$ \cite{RS98,Cai07}.
(Under suitable hardness assumptions, it holds that $\PNP = \SH2$ by derandomization \cite{KM02}.)

For a deterministic selector, the supremum is $\PSPACE$:
\begin{theorem}
  \label{thm:honestyT} \hspace{0em}
  \begin{enumerate}
    \item
      \label{enum:honestyTlower}
      Every $\PSPACE$-complete language has a deterministic selector.
      More generally, any downward self-reducible language has a deterministic selector.
    \item
      \label{enum:honestyTupper}
      Any language with a deterministic selector is in $\PSPACE$.
  \end{enumerate}
\end{theorem}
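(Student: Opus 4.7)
The plan is to prove Theorem~\ref{thm:honestyT} in two parts.

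\emph{Part 1 (lower bound).} I prove the more general statement that every downward self-reducible language $L$ has a deterministic selector; the $\PSPACE$-complete case then follows because $\mathrm{TQBF}$ is downward self-reducible. Let $R$ be a polynomial-time oracle machine witnessing the downward self-reduction, so that $R^L(x) = L(x)$ and every query made by $R$ on input $x$ has length strictly less than $|x|$. I define the selector $S^{A_0,A_1}(x)$ to simulate $R$ on $x$ while maintaining a shared state variable $h \in \{\bot,0,1\}$ (initially $\bot$), denoting the ``confirmed honest oracle''. Whenever $R$ issues a query $q$: if $h \neq \bot$, answer with $A_h(q)$; else if $A_0(q) = A_1(q)$, answer with that common value (necessarily $L(q)$ because at least one oracle is honest); otherwise, recursively invoke $S^{A_0,A_1}(q)$ to obtain $L(q)$, update $h$ to whichever index satisfies $A_h(q) = L(q)$, and use $L(q)$ as the response. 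Correctness is by induction on $|x|$, since every query of $R$ is then answered by $L(q)$ and $R$ returns $L(x)$. The key observation for the running time is that each invocation of $S$ makes at most one recursive call: once $h$ becomes non-$\bot$, every subsequent query (at any depth of the recursion) is resolved by a single oracle lookup. Hence the recursion depth is at most $|x|$, yielding an overall polynomial-time algorithm.

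\emph{Part 2 (upper bound).} Let $S$ be a deterministic polynomial-time selector for $L$. I claim
\[
L(x) = v \iff \exists A_0\, \forall A_1\, S^{A_0, A_1}(x) = v.
\]
The forward direction uses the witness $A_0 = L$: the selector guarantee gives $S^{L, A_1}(x) = L(x)$ for any $A_1$. For the converse, given a witness $A_0^\star$, substitute $A_1 = L$; since $A_1$ is honest, the selector guarantee forces $S^{A_0^\star, L}(x) = L(x)$, so $v = L(x)$.

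To decide this $\Sigma_2$-style condition in $\PSPACE$ I use alternating polynomial time: simulate $S^{?,?}(x)$, existentially guessing each $A_0$-response and universally guessing each $A_1$-response, with consistency across repeated queries enforced via a history table, accepting iff $S$'s output equals $v$. The main (and only nontrivial) obstacle is that this simulation effectively grants each player an \emph{adaptive} strategy, whereas the characterization refers to functions. I would resolve this as follows: if the alternating machine accepts via some $\exists$-strategy $\sigma$, instantiate the universal quantifier with $A_1 = L$; then $\sigma$'s responses during the resulting single execution form a consistent partial function, extendable to a full oracle $A_0^\star$ with $S^{A_0^\star, L}(x) = v$, and honesty of $A_1 = L$ then forces $v = L(x)$ exactly as in the characterization. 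Completeness (when $L(x) = v$) is immediate via the constant strategy $A_0 = L$. Hence alternating polynomial time, which equals $\PSPACE$, suffices, and $L \in \PSPACE$.
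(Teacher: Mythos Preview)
Your proposal is correct in both parts.

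For Part~2, your argument is essentially the paper's: simulate the selector in alternating polynomial time, guessing one oracle's answers existentially and the other's universally (with consistency enforced), and appeal to $\mathrm{APTIME}=\PSPACE$. The only cosmetic differences are that the paper assigns the existential role to $A_1$ and the universal role to $A_0$ (and hardcodes $A_i(x)=i$), whereas you swap the indices and dispense with that hardcoding via the cleaner characterization $L(x)=v \iff \exists A_0\,\forall A_1\, S^{A_0,A_1}(x)=v$. Your explicit discussion of why adaptive strategies collapse to fixed oracles is a point the paper waves away with ``one can easily prove,'' so your write-up is actually more careful there.

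For Part~1, both arguments are correct but the mechanics differ. The paper's selector is iterative: it maintains a current disagreement point $y$ (initially $x$ when $A_0(x)\neq A_1(x)$), runs the downward self-reduction \emph{twice} on $y$---once with oracle $A_0$ and once with $A_1$---and either (i) the two outputs agree, in which case that common output is $L(y)$ and the oracle $A_i$ with $A_i(y)=L(y)$ is identified as honest, or (ii) they disagree, in which case the first query $q$ on which $A_0(q)\neq A_1(q)$ becomes the new, strictly shorter $y$. Your selector instead simulates the self-reduction once on $x$ and \emph{recurses on the selector itself} at the first disagreeing query, using the global flag $h$ to guarantee that only one recursive call occurs per level and hence that the recursion depth is at most $|x|$. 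Both approaches descend along a chain of shrinking disagreement points and do polynomial work per level; the paper's version avoids the slightly delicate global-state bookkeeping, while yours makes the inductive correctness argument on $|x|$ more transparent.
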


As with Theorem~\ref{thm:characterizationelim}, a property of removing short advice for deterministic computation
can be characterized by existence of a deterministic selector:
\begin{theorem}
  \label{thm:characterizationdet}
  Let $L$ be an arbitrary paddable language.
  The following are equivalent:
  \begin{enumerate}
    \item
      There exists a deterministic selector for $L$.
    \item
      For any oracle $R \subset \binstr$,
      it holds that
      $L \in \PTIME^R / \log$ implies $L \in \PTIME^R$.
  \end{enumerate}
\end{theorem}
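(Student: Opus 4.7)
I propose to establish Theorem~\ref{thm:characterizationdet} by proving the two directions separately, mirroring the structure of the analogous probabilistic characterization in Theorem~\ref{thm:characterizationelim}.

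For the direction $(1) \Rightarrow (2)$, I first promote the $2$-oracle selector $S$ to a polynomial-time deterministic $k$-oracle selector, since the advice-removal step will produce $k = \mathrm{poly}(n)$ candidate oracles and a naive recursive tournament would incur a quasi-polynomial blowup. The key primitive is a pairwise-disagreement test: given oracles $A_1, \dots, A_k$ with at least one honest, query each at $x$ to obtain bits $b_i$ and partition the indices by answer into $G_b := \{i : b_i = b\}$. If one group is empty, output the agreed value; otherwise output the unique $b \in \binset$ such that some $i \in G_b$ satisfies $S(A_i, A_j, x) = b$ for every $j \in G_{1-b}$. The honest oracle $A^*$ witnesses this condition for $b = L(x)$, since every pair $(A^*, A_j)$ returns $L(x)$; conversely, if some $i \in G_{1-L(x)}$ were a spurious witness for $b = 1 - L(x)$, then taking $j = *$ would force $S(A_i, A^*, x) = 1 - L(x)$, contradicting honesty of $A^*$. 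This uses only $O(k^2)$ invocations of $S$. Given $L \in \PTIME^R / \log$ via a machine $M$ with $c \log n$ advice bits and an input $x$ of length $n$, I pick $N = n^{O(1)}$ exceeding all query lengths of the $k$-selector on $x$, and for each candidate $a \in \binset^{c \log N}$ define an oracle $A_a(y) := M^R(\pi(y, N), a)$, where $\pi$ is the polynomial-time padding map supplied by paddability of $L$. The oracle $A_{\alpha_N}$ corresponding to the correct advice $\alpha_N$ for length $N$ satisfies $A_{\alpha_N}(y) = L(\pi(y, N)) = L(y)$ for every $y$ of length at most $N$, so it is honest on all queries the $k$-selector actually makes. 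Applying the $k$-selector to the $N^c = \mathrm{poly}(n)$ oracles $\{A_a\}_a$ then computes $L(x)$ in $\PTIME^R$.

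For the direction $(2) \Rightarrow (1)$, I extract a selector directly from the hypothesized advice-removal. Given oracles $A_1, A_2$ with at least one honest, let $R := A_1 \oplus A_2$ be the combined oracle answering $R(b, y) = A_b(y)$. The machine that, on input $x$ with a single advice bit $b$, queries $R(b, x)$ and outputs the result decides $L$ whenever $b$ is the index of an honest oracle, so $L \in \PTIME^R / \log$. Hypothesis $(2)$ then provides a polynomial-time oracle machine $M'$ with $M'^R = L$, and $S(A_1, A_2, x) := M'^{A_1 \oplus A_2}(x)$ is the desired deterministic selector.

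The main obstacles are twofold. The first is the $k$-selector construction in $(1) \Rightarrow (2)$: extracting a single correct bit from a polynomially-sized population of oracles of which only one is guaranteed honest is exactly what the pairwise-disagreement test achieves, and picking this primitive (rather than a recursive tournament) is what keeps the running time polynomial. The second is the uniformity of $(2)$: for $M'$ to serve as a single uniform selector across all admissible $(A_1, A_2)$, the implication in $(2)$ must be read as a uniform, $R$-oblivious transformation of $\PTIME^R / \log$-machines into $\PTIME^R$-machines. I would take this as the intended reading, which is the standard convention for relativizing advice-removal statements and is consistent with the manifestly uniform reduction constructed in $(1) \Rightarrow (2)$.
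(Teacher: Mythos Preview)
Your direction $(1) \Rightarrow (2)$ is correct and is essentially the paper's argument; your all-pairs test is a minor variant of the elimination tournament in Lemma~\ref{lemma:robustmany}, and both work deterministically.

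Your direction $(2) \Rightarrow (1)$ has a real gap, which you yourself flag but then wave away. The argument needs $(2)$ to supply a \emph{single} machine $M'$ that works for every $R = A_1 \oplus A_2$, i.e., a uniform $R$-oblivious transformation. But statement $(2)$, read in the ordinary way, only says that for each fixed $R$ there exists \emph{some} $\PTIME^R$ machine for $L$; that machine may depend on $R$, hence on the particular dishonest oracle, and so does not give a selector. This non-uniform reading is not a convention to be chosen---it is the weaker hypothesis, and it is precisely what the paper shows equivalent to $(1)$. The paper closes the gap by diagonalization (Part~\ref{enum:elim2selector} of Theorem~\ref{thm:finercharact}, carried over verbatim to the deterministic setting): assuming $L$ has no deterministic selector, one enumerates all deterministic polynomial-time oracle machines $M_1, M_2, \ldots$ and builds a single pair $R_0, R_1$ in stages, at stage $e$ extending the oracles so that $M_e^{R_0,R_1}$ errs on some input $x^{(e)}$ (such an extension must exist, since otherwise $M_e$ with the finitely many already-fixed answers hard-wired would itself be a deterministic selector), while maintaining that on every input length one of $R_0, R_1$ agrees with $L$. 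This yields $L \in \PTIME^{R_0,R_1}/1$ but $L \notin \PTIME^{R_0,R_1}$, contradicting $(2)$ for the single oracle $R$ encoding $(R_0, R_1)$.
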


\subsection{Comparison with Prior Work}

In seminal work by Karp and Lipton~\cite{KL82} as to collapses of a uniform class contained in a nonuniform class,
it was shown that $\NP \subset \PTIME/\log$ implies $\NP \subset \PTIME$ and $\PSPACE \subset \PTIME/\log$ implies $\PSPACE \subset \PTIME$.
These results are essentially equivalent to the existence of deterministic selectors for $\NP$- and $\PSPACE$-complete languages, respectively.

Fortnow and Klivans~\cite{FK05} observed that $\NEXP \subset \BPPlog$ implies $\NEXP = \BPP$ by combining previous results.
Similarly, it is folklore that $\EXPNP \subset \BPPlog$ implies $\EXPNP = \BPP$.
This follows by combining the result by Buhrman and Homer~\cite{BH92} stating that $\EXPNP \subset \EXP/{\rm poly}$ implies $\EXPNP = \EXP$,
the existence of an instance checker (or a selector) for $\EXP$-complete languages, and $\BPPlog \subset \Ppoly$ (see \cite{FK05}).

We clarify the differences between the folklore and our results in two respects.
First, our results can be relativized on the right-hand side.
Second, selectors can be used to quantitatively remove advice of logarithmic size:
if we allow a machine to run in time $t$ (instead of polynomial time), then advice of size $\log t$ can be removed.
\begin{corollary}
  [Analogous to Proposition 5.6 in \cite{TV07}]
  \label{cor:quantitative}
  There are an $\EXPNP$-complete language $L$ and a constant $d \in \N$ such that,
  for any nice time bound\footnote{
    Although the definition of a \emph{nice time bound} is the same as in \cite{TV07},
    we note that the condition $t(n) \le 2^n$ is not needed here.
  } $t \colon \N \to \N$ and any oracle $R \subset \binstr$,
  if $L \in {\rm BPTIME}^R(t(n)) \doubleslash \log t(n)$ then $L \in {\rm BPTIME}^R(t(n^d))$.
\end{corollary}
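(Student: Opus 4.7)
The plan is to combine the selector for $\EXPNP$-complete languages from Theorem~\ref{thm:honesty}, Part~\ref{enum:honestylower}, with a padding trick and an all-pairs use of the selector. Fix a paddable $\EXPNP$-complete language $L$ together with its selector $S$, and let $d_0$ be a constant such that $S$ runs in time $n^{d_0}$. The final constant $d$ will be chosen as a sufficiently large multiple of $d_0$ depending on the niceness of $t$.

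Suppose $L \in {\rm BPTIME}^R(t(n)) \doubleslash \log t(n)$ is witnessed by a probabilistic oracle machine $M^R$ with advice of length $\log t(n)$. On input $x$ of length $n$, set $\Lambda := n^{d_0}$; this is an upper bound on the length of any query that $S$ makes when started on $x$. Following the approach behind Theorem~\ref{thm:characterizationelim}, for each candidate advice string $a \in \binset^{\log t(\Lambda)}$ define a candidate oracle $O_a$ as follows: on a query $q$ of length at most $\Lambda$, pad $q$ up to length exactly $\Lambda$ using paddability of $L$, run $M^R$ on the padded query with advice $a$ using fresh randomness a polynomial number of times, and output the majority vote. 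There are $t(\Lambda)$ such candidates, and some $a^{*}$ yields an $O_{a^{*}}$ that agrees with $L$ on every query of length at most $\Lambda$; in particular on every query $S$ can issue on $x$.

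We identify the honest candidate by a flat all-pairs application of $S$ on the single input $x$. For every ordered pair $(a, b)$ of advice strings (including $a = b$), compute $v_{a,b} := S^{O_a, O_b}(x)$, yielding a $t(\Lambda) \times t(\Lambda)$ table. The selector guarantee gives $v_{a^{*}, b} = L(x)$ for every $b$, so the table has a constant row. We output any value $v$ such that some row is constant with value $v$: if row $a$ is constant, then $v_{a, a^{*}} = L(x)$ (because $O_{a^{*}}$ is honest), which forces $v = L(x)$.

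The total running time is dominated by $t(\Lambda)^2$ invocations of $S$, each costing $n^{d_0}$ steps plus at most $n^{d_0}$ oracle queries, where each query is answered in time $t(\Lambda)$ by simulating $M^R$ on a length-$\Lambda$ input. This gives overall running time $t(n^{d_0})^{O(1)} \cdot \mathrm{poly}(n)$, which is at most $t(n^d)$ for $d$ chosen sufficiently large in terms of $d_0$, invoking the niceness of $t$. The main obstacle is that $t(n)$ candidate oracles may be superpolynomial in $n$, so the usual recursive tournament of selector calls would nest selector invocations and blow up the query depth exponentially in $\log t(n)$; the flat all-pairs approach keeps every selector invocation directly on the original $O_a$'s and incurs only a quadratic blowup in the number of candidates, which is absorbed by niceness of $t$.
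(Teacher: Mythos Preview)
Your overall plan---enumerate the $m=t(\Lambda)$ candidate advice strings as oracles and let the selector pick out the honest one---is exactly what the paper intends (the remark after Theorem~\ref{thm:finercharact} says to rerun that proof and the proof of Lemma~\ref{lemma:robustmany} with the larger~$m$). There is, however, one real gap and one misconception.

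The gap is in your definition of $O_a$. In the $\doubleslash$ model the advice depends on the random string: the guarantee is $\Pr_r\bigl[\forall x,\ M^R(x,r,\alpha(r))=L(x)\bigr]\ge 5/6$, with $\alpha(r)$ allowed to vary over~$r$. If you draw \emph{fresh} randomness per query and majority-vote with a \emph{fixed}~$a$, nothing forces any single $a$ to be correct on a majority of random strings; e.g.\ take $\alpha(r)$ to be the first bit of $r$ and let $M$ flip its output whenever $a\neq\alpha(r)$---then every fixed $a$ is wrong with probability exactly $1/2$ and no amount of repetition helps, so no $a^*$ with the property you claim exists. The fix, which is what the paper does in the proof of Theorem~\ref{thm:finercharact}, is to sample a \emph{single} $r$ of length $t(\Lambda)$ once and set $O_a(q):=M^R(\mathrm{pad}(q),r,a)$; with probability $\ge 5/6$ over $r$, the oracle $O_{\alpha(r)}$ is then honest on all length-$\Lambda$ inputs.

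The misconception is your remark that ``the usual recursive tournament of selector calls would nest selector invocations.'' The tournament in Lemma~\ref{lemma:robustmany} is already flat: each round runs $S^{A_j,A_k}(x)$ directly on two of the original candidate oracles and eliminates one, for at most $m-1$ selector calls and no nesting whatsoever. That is the paper's intended argument here, with $m=t(\Lambda)$ and the per-call error amplified to $1/(3m)$ by $O(\log m)$ repetitions. Your all-pairs table is a correct alternative (once you also amplify each entry so a union bound over the $2m-1$ entries in the $a^*$-row and $a^*$-column succeeds); it merely costs an extra factor of~$m$, which the niceness of $t$ absorbs into~$d$.
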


We mention in passing that, by substituting selectors for instance checkers in the proofs of Trevisan and Vadhan~\cite{TV07},
one can obtain a quantitative tradeoff from a uniform worst-case-hardness assumption on $\EXPNP$ to
a uniform average-case hardness of $\EXPNP$ (see \cite[Theorem 5.7]{TV07}).

\subsection{Application: Random Strings vs.\ Randomized Computation}

In Section \ref{sec:app},
we will give another application in order to demonstrate usefulness of the notion of selector,
by simply substituting selectors for instance checkers in the previous work
by Buhrman, Fortnow, Kouck\'{y}, and Loff~\cite{BFKL10}.

They tried to show that
a deterministic polynomial-time computation with oracle access to the set of Kolmogorov-random strings
is, in some sense, equivalent to
a probabilistic polynomial-time computation;
they modeled oracle access to the set of Kolmogorov-random strings as advice strings of high nonuniform complexity.
Although the nonuniform complexity of the advice strings is required to be much higher than that of Kolmogorov-random strings,
they showed, as a partial result, that if a language $L$ can be solved in deterministic polynomial time with high nonuniform advice,
then $L$ is in $\BPP$ with advice of almost linear size~\cite[Theorem 13]{BFKL10}.

Because the goal is to show that $L$ is in $\BPP$ \emph{without} any advice,
they further observed that
one can dispense with the advice of almost linear size
if there exists an instance checker for $L$.
From this observation,
they showed that,
for any class $\mathcal C \in \{\NP,\PTIME^{\#\PTIME},\PSPACE,\EXP\}$,
if some $\mathcal C$-complete language can be solved in deterministic polynomial time with high nonuniform advice,
then
$\mathcal C \subset \BPP$~\cite[Theorem 15]{BFKL10}.

In fact, they proved this result by analyzing the two cases:
For $\mathcal C \in \{\PTIME^{\#\PTIME}, \allowbreak \PSPACE, \allowbreak \EXP\}$,
they used an instance checker for $\mathcal C$-complete languages, whose existence was shown by \cite{LFKN92,Sha92,BFL91};
Unfortunately,
because
it is not known whether $\NP$-complete languages have instance checkers or not,
they needed to prove the result in another way solely for $\mathcal C = \NP$.

The notion of selector, however, enables us to show the result in a unified way
and to extend the result from
$\{\NP,\allowbreak\PTIME^{\#\PTIME},\allowbreak\PSPACE,\allowbreak\EXP\}$ to any classes whose complete languages have a selector.
Given the fact that many languages have selectors (\eg languages with instance checkers and downward self-reducible languages),
it becomes more plausible that we can dispense with the advice of almost linear size;
thereby we slightly strengthen the connection between Kolmogorov-random strings and randomized computation.

\subsubsection*{Organization}

In Section \ref{sec:basics},
we give
formal definitions, common properties of selectors, and a proof of Theorem \ref{thm:characterizationelim}.
Sections \ref{sec:truth}, \ref{sec:bpp}, and \ref{sec:turing} are 
devoted to investigating
nonadaptive deterministic selectors,
probabilistic selectors, and
deterministic selectors,
respectively.
We mention some possible directions for future work in Section \ref{sec:conclusions}.

\subsection*{Preliminaries and Notations}

We assume that the reader is familiar with basics of computational complexity (\eg \cite{AB09}).

For a Turing machine $M$, let $M(x)$ denote the output of $M$ on input $x \in \binstr$.
For an oracle Turing machine $M$ and oracles $A_0, A_1 \subset \binstr$,
let $M^{A_0, A_1}$ represent a machine equipped with access to oracle $A \subset \binstr$ such that
$A(i \cdot q) = A_i(q)$, for each $i \in \binset$ and for any $q \in \binstr$.
We identify false and true with $0$ and $1$, respectively.
We also identify a language $L \subset \binstr$ with its characteristic function from $\binstr$ to $\binset$.
For a Boolean formula $\phi$ in $n$ variables, we abuse notation and write $\phi \colon \binset^n \to \binset$.

We say that a language $L$ is paddable if there exists a polynomial-time machine that,
on input $(x, 1^m)$ where $x \in \binset^n$ and $n \le m$,
outputs a string $y$ of length $m$ such that
$y \in L$ if and only if $x \in L$.

\section{Definitions and Common Properties of Selectors}
\label{sec:basics}
In this section,
we give formal definitions of selectors
and show common properties that all types of selectors have.
First, we define a probabilistic selector:
\begin{definition}[Probabilistic Selector]
  \label{def:honesty}
  A \emph{(probabilistic) selector} $S$ for a language $L \subset \binstr$
  is
  a probabilistic polynomial-time oracle Turing machine
  which
  computes $L$ with high probability,
  given arbitrary two oracles $A_0, A_1 \subset \binstr$ such that $A_0$ or $A_1$ is equal to $L$.
  That is, for any input $x \in \binstr$ and oracles $A_0, A_1 \subset \binstr$,
  \begin{align*}
    L \in \ESet{ A_0, A_1 }
    \implies
    \Pr \left[ S^{A_0, A_1} (x) = L(x) \right] \ge \frac{2}{3},
  \end{align*}
  where the probability is taken over coin flips of $S$.
\end{definition}
Note that the success probability $\frac 2 3$ in Definition \ref{def:honesty} can be enhanced by repetitions.
We often abbreviate a probabilistic selector as a selector.

An oracle equal to $L$ is said to be \emph{honest}; otherwise it is said to be \emph{dishonest}.

Next, we define a deterministic selector and a nonadaptive deterministic selector:
\begin{definition}[Deterministic Selector]
  \label{def:honestyT}
  A \emph{deterministic selector} for a language $L$ is
  a deterministic polynomial-time oracle machine $S$
  such that
  $S^{L, X} (x) = S^{X, L} (x) = L(x)$
  for any oracle $X \subset \binstr$ and for any input $x \in \binstr$.
\end{definition}
\begin{definition}[Nonadaptive Deterministic Selector]
  \label{def:honestytt}
  A \emph{nonadaptive deterministic selector} $S$ for a language $L$ is
  a deterministic polynomial-time oracle machine
  such that
  \begin{itemize}
    \item
      $S^{L, X} (x) = S^{X, L} (x) = L(x)$ for any oracle $X \subset \binstr$ and any input $x \in \binstr$, and
    \item
      $S$ is nonadaptive, \ie
      there exists a polynomial-time machine which, on input $x \in \binstr$, outputs the query set $Q(x)$
      of all the queries that $S$ makes to either of the oracles.
  \end{itemize}
\end{definition}

We state a useful structural property:
\begin{proposition}
  \label{prop:closedequiv}
  The class of the languages with a selector is closed under polynomial-time Turing equivalence.
  Namely,
  $L_1 \reduction T L_2$ and $L_2 \reduction T L_1$ imply that if $L_1$ has a selector then so does $L_2$.

  In particular,
  it is closed under complement.
  Moreover,
  for any complexity class $\mathcal C$, if a \emph{specific} $\mathcal C$-complete language
  has a selector, then so does an \emph{arbitrary} $\mathcal C$-complete language.
\end{proposition}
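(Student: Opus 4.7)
The plan is to construct a selector for $L_2$ by combining the selector for $L_1$ with the polynomial-time Turing reductions between the two languages. Fix a polynomial-time oracle machine $M$ such that $M^{L_1}$ computes $L_2$, a polynomial-time oracle machine $N$ such that $N^{L_2}$ computes $L_1$, and a selector $S_1$ for $L_1$. Given an input $x$ and two oracles $A_0, A_1$ for $L_2$ at least one of which equals $L_2$, the selector $S_2$ for $L_2$ will simulate $M$ on $x$ and answer each query $q$ to $L_1$ that $M$ makes by invoking $S_1$ on $q$ with the two oracles $B_i \mathrel{\mathop:}= N^{A_i}$.

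The correctness analysis is straightforward. Whenever $A_i = L_2$, we have $B_i = N^{L_2} = L_1$, so the pair $(B_0, B_1)$ satisfies the honesty assumption required by $S_1$. Since $N$ runs in polynomial time, each ``virtual'' oracle $B_i$ can be simulated in polynomial time using queries to the corresponding oracle $A_i$, so $S_2$ remains polynomial-time. Let $p(|x|)$ be an upper bound on the number of queries $M$ makes; by standard repetition, we amplify the success probability of $S_1$ from $\frac 2 3$ to $1 - \frac{1}{3 p(|x|)}$, and by a union bound all $L_1$-queries are answered correctly with probability at least $\frac 2 3$, in which case $S_2$ outputs $M^{L_1}(x) = L_2(x)$.

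The two corollaries follow from the main claim. For closure under complement, note that $L$ and its complement $\bar L$ are polynomial-time Turing equivalent (the reduction is trivial). For the ``complete language'' statement, any two languages that are complete for the same complexity class $\mathcal C$ under polynomial-time many-one reductions are in particular polynomial-time Turing equivalent, so a selector for one yields a selector for the other via the construction above.

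I do not anticipate a serious obstacle here; the only point requiring care is ensuring the polynomially many invocations of $S_1$ all succeed simultaneously, which is handled by routine amplification. The same construction works verbatim for deterministic and nonadaptive deterministic selectors (no amplification is needed in those cases, although nonadaptivity must be verified: since $M$ and $N$ are fixed polynomial-time machines and the queries that $S_1$ makes depend only on the query $q$, the composed machine remains nonadaptive provided the underlying selector and reductions are).
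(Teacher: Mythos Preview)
Your proof is correct and follows essentially the same approach as the paper: both simulate the reduction $M$ from $L_2$ to $L_1$, answer each $L_1$-query $q$ by running the selector $S_1$, and feed $S_1$ the virtual oracles $N^{A_i}$ obtained via the reverse reduction, with routine amplification to control the union bound. Your treatment of the corollaries and of the deterministic/nonadaptive variants matches the paper's remarks as well.
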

\begin{proof}
  The proof is essentially the same with Beigel's theorem~\cite{BK95}, which shows the same closure property of instance checkers.
  The idea is as follows: reduce a $L_2$ problem to a $L_1$ problem by using the reducibility from $L_2$ to $L_1$,
  and solve the $L_1$ problem by running a selector for $L_1$, while converting its query (which is an instance of $L_1$) into an instance of $L_2$.

  Let $M_{ij}$ be a polynomial-time oracle machine that witnesses the polynomial-time Turing reduction $L_i \reduction T L_j$ for each $(i, j) \in \ESet{(1, 2), (2, 1)}$
  (that is, $M_{ij}^{L_j}(x) = L_i(x)$ for any $x$),
  and $S$ be a selector for $L_1$.
  The following algorithm yields a selector for $L_2$:
  Given an input $x \in \binset^n$ and two oracles $A_0, A_1$, simulate $M_{21}(x)$ in order to compute $L_2(x)$.
  If $M_{21}$ makes a query $q$,
  then we try to answer it with $L_1(q)$, by running $S(q)$.
  If $S$ makes a query $q'$ to the $i$th oracle $(\,i\in\binset\,)$,
  then answer it with $M^{A_i}_{12}(q')$.

  Let $A_i$ be an honest oracle (\ie $A_i = L_2$).
  Then, we have $M^{A_i}_{12}(q') = M^{L_2}_{12}(q') = L_1(q')$,
  and hence $S(q)$ is simulated under the existence of the honest oracle; thus it outputs $L_1(q)$ correctly with high probability (say, with probability at least $1 - 2^{-n}$, by running the selector $O(n)$ times).
  Therefore, the simulation of $M_{21}(x)$ results in outputting $L_2(x)$ with probability at least $1 - 2^{-n} n^{O(1)}$.
\end{proof}

\begin{remark}
  \label{remark:closedequiv}
  Similarly, the class of languages with a deterministic selector is closed under polynomial-time Turing equivalence,
  and the class of languages with a nonadaptive deterministic selector is closed under polynomial-time truth-table (\ie nonadaptive) equivalence.
\end{remark}

To prove Theorem \ref{thm:characterizationelim},
we show that
the definitions of selectors are robust even if
we consider a situation in which we are given polynomially many oracles.
\begin{lemma}
  \label{lemma:robustmany}
  For any language $L \subset \binstr$, the following are equivalent:
  \begin{enumerate}
    \item
      There exists a selector for $L$.
    \item
      There exists a selector for $L$ that identifies an honest oracle among polynomially many oracles.
  \end{enumerate}

The latter can be formally stated as follows:
  for any polynomial $m \colon \N \to \N$,
  there exists a probabilistic polynomial-time oracle Turing machine $S$ such that,
  on input length $n \in \N$,
  it holds that
  $ \Pr \left[ S^A (x) = L(x) \right] \ge \frac 2 3 $
  for any $x \in \binset^n$,
  where $A$ is an arbitrary oracle such that
  there exists an index $i \in \NumSet{m(n)}$ that satisfies
  $A(i, q) = L(q)$ for all $q \in \binstr$.
\end{lemma}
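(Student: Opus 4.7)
The easy direction $(2) \Rightarrow (1)$ is immediate by taking $m(n) \equiv 2$. For $(1) \Rightarrow (2)$, suppose $S$ is a two-oracle selector for $L$, and let $A$ encode $m := m(n)$ candidates via $A_i(q) := A(i, q)$, with the promise that at least one $A_k$ equals $L$. The plan is to invoke $S$ on every unordered pair of candidates and then search for a \emph{consistent row} in the resulting matrix. Concretely, first amplify $S$ by taking the majority over $O(\log n)$ independent runs so that every call returns the correct value with probability at least $1 - 1/(6m)$; since $m$ is polynomial, this stays within polynomial time. Then, for every $1 \le i < j \le m$, compute $M_{ij} := S^{A_i, A_j}(x)$ using the amplified selector (each query $A_i(q)$ is implemented by querying $A(i, q)$), set $M_{ji} := M_{ij}$, and output the common value of any row $i$ in which $\ISet{M_{ij}}{j \ne i}$ is constant. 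If no such row exists, output an arbitrary bit.

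Correctness rests on two observations. Let $k$ be an honest index. For every $j \ne k$ the pair $(A_k, A_j)$ satisfies the two-oracle promise, so the amplified call $M_{kj}$ equals $L(x)$ with probability at least $1 - 1/(6m)$; a union bound over the $m - 1$ off-diagonal entries of row $k$ shows that this row is constantly $L(x)$ with probability at least $5/6$, so the search certainly succeeds. Conditioning on that event, if the algorithm instead selects some other row $i$ with common value $v$, then by symmetry of $M$ we get $v = M_{ik} = M_{ki} = L(x)$. Hence every accepted row agrees with $L(x)$, and the overall success probability is at least $5/6 > 2/3$.

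The main subtlety to double-check is that the two-oracle promise is only guaranteed on pairs containing $A_k$, so rows other than $k$ may well contain wholly unreliable entries; what saves us is that the search criterion itself weeds these out, because any spuriously constant row must coincide with row $k$ via the shared entry $M_{ik}$. Efficiency is routine: there are $\binom{m}{2} = n^{O(1)}$ pairs and each amplified call runs in polynomial time, so the resulting selector $S_m$ remains polynomial-time. I do not expect any genuine obstacle; the only delicate point is choosing the amplification level so that the union bound over the $m-1$ queries in row $k$ leaves constant success probability, which $1/(6m)$ comfortably achieves.
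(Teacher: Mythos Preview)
Your proof is correct, but it takes a different route from the paper's. The paper uses a \emph{tournament elimination}: it partitions the oracles into $C_0 = \{j : A_j(x)=0\}$ and $C_1 = \{k : A_k(x)=1\}$, then repeatedly picks one index from each side, runs the (amplified) two-oracle selector on that pair, and discards whichever side loses; when one side empties it outputs the surviving answer. Correctness follows because the honest index sits in $C_{L(x)}$ and each time it is pitted against an opponent it survives with probability $\ge 1 - \frac{1}{3m}$, so after at most $m$ challenges it remains with probability $\ge 2/3$. Your approach instead fills in the full symmetric matrix $(M_{ij})$ of pairwise selector values and outputs any constant row, using the observation that a constant row $i$ must agree with the honest row $k$ through the shared entry $M_{ik}=M_{ki}$. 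Both arguments amplify and then union-bound over only the $m-1$ ``honest'' comparisons; the paper's variant is slightly more frugal (about $m$ selector calls rather than $\binom{m}{2}$), while your matrix/constant-row argument is a bit cleaner to state and avoids tracking a dynamic process. Either works here since $m$ is polynomial, and your version also extends to the nonadaptive setting (all queries are determined up front), which is relevant to the paper's later remark about nonadaptive deterministic selectors.
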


\begin{proof}
  The one direction is obvious:
  If there exists a selector that works among $m(n)$ oracles,
  then letting $m(n) := 2$ yields a selector that works among two oracles.

  Conversely, let $S$ be a selector (that identifies an honest oracle among two oracles) with probability at least $1 - \frac{1}{3m(n)}$.
  Given an oracle $A$,
  let $A_i(q)$ denote $A(i, q)$ for any $i \in \N$.
  On input $x \in \binset^n$,
  we first make a query  $x$ to all the oracles $A_1, \cdots, A_{m(n)}$,
  and divide them into the two sets according to their answers:
  \begin{align*}
    C_0 &= \ISet{ j \in \NumSet{m(n)} } { A_j(x) = 0 }, \\
    C_1 &= \ISet{ k \in \NumSet{m(n)} } { A_k(x) = 1 }.
  \end{align*}
  \def\oracleanswer{\alpha}
  That is,
  $C_\oracleanswer \supplement{\oracleanswer \in \binset}$ is the set of the indices of all the oracles asserting that $L(x) = \oracleanswer$.

  Next, we repeat the following until $C_0 = \emptyset$ or $C_1 = \emptyset$:
  Pick arbitrary elements $j \in C_0$ and $k \in C_1$.
  We check which is a supposedly honest oracle by running
  $S^{A_j, A_k}$ on input $x$.
  If $S^{A_j, A_k}(x) = 0$, then we doubt $A_k$ and thus eliminate $k$ from $C_1$;
  Otherwise we doubt $A_j$ and eliminate $j$ from $C_0$.

  Finally, we output $1$ if and only if $C_1 \neq \emptyset$.

  Now let us analyze this algorithm.
  It runs in polynomial time because $|C_0| + |C_1|$ is decreased by one in each repetition.

  We claim the correctness of the algorithm.
  For simplicity, we assume that $L(x) = 0$.
  Then, there exists an index $i \in \NumSet{m(n)}$ such that $A_i$ is honest and $i \in C_0$.
  If $i \in C_0$ and some $k \in C_1$ are picked in a repetition,
  then $\Pr \left[ S^{A_i, A_k}(x) = 0 \right] \ge 1 - \frac{1}{3m(n)}$.
  That is, $i$ remains in $C_0$ with probability at least $1 - \frac{1}{3m(n)}$.
  Since $i$ is picked at most $|C_1| \supplement{\le m(n)}$ times,
  the probability that $i$ remains in $C_0$ is at least $1 - m(n) \cdot \frac{1}{3m(n)} = \frac 2 3$.
\end{proof}

\begin{remark}
  Although Lemma \ref{lemma:robustmany} is stated only for a probabilistic selector,
  analogous statements hold for a deterministic selector and a nonadaptive deterministic selector.
  For a deterministic selector, one can easily check that the same proof works.
  For a nonadaptive deterministic selector, we must compute the query set in polynomial time.
  On input $x$, let $Q(x)$ denote the query (to either $A_0$ or $A_1$) set of a selector that identifies an honest oracle among two oracles.
  Then we can define all the set of possible queries as $Q'(x) := \ISet { (i, q) \in \N \times \binstr } { 1 \le i \le m(|x|) , \  q \in Q(x) \cup \{x\} }$,
  which is clearly computable in polynomial time.
\end{remark}

By using Lemma \ref{lemma:robustmany},
we characterize the class of the paddable languages with a selector by
the property that
short advice can be removed under any relativized world.
In fact, we can prove a statement stronger than Theorem \ref{thm:characterizationelim}:
\begin{theorem}
  \label{thm:finercharact} \hspace{0em}
  \begin{enumerate}
    \item
      \label{enum:selector2elim}
      For any paddable language $L$,
      if $L$ has a selector, then
      $L \in \BPP^R \doubleslash \log$ implies $L \in \BPP^R$ for any oracle $R \subset \binstr$.
    \item
      \label{enum:elim2selector}
      For any language $L$,
      if $L \in \PTIME^R / 1$ implies $L \in \BPP^R$ for any oracle $R \subset \binstr$,
      then $L$ has a selector.
  \end{enumerate}
\end{theorem}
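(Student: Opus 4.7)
The plan is to treat the two directions separately. Part~\ref{enum:elim2selector} follows from a direct oracle-encoding argument, while Part~\ref{enum:selector2elim} combines Lemma~\ref{lemma:robustmany} with paddability and amplification.

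\emph{Part~\ref{enum:elim2selector}.} I would take any pair $A_0, A_1$ with $L \in \ESet{A_0, A_1}$ and fold them into a single oracle $R$ satisfying $R(i \cdot q) := A_i(q)$, exactly as in the preliminaries. The one-bit advice $i^* \in \binset$ with $A_{i^*} = L$ is independent of input length, so $L \in \PTIME^R / 1$ via the trivial deterministic machine that outputs $R(i^* \cdot x)$. Applying the hypothesis to this $R$ yields $L \in \BPP^R$, and the underlying probabilistic polynomial-time oracle machine, when unfolded through the encoding of $R$, is a selector for $L$. The only mild care needed is to read the hypothesis as uniformly providing a single $\BPP$ oracle machine that works for every admissible $R$; this is the standard relativized reading of such implications.

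\emph{Part~\ref{enum:selector2elim}.} Suppose $L$ is paddable with selector $S$, and $L \in \BPP^R \doubleslash \log$ via a probabilistic polynomial-time $M$ with $O(\log n)$-bit advice that may depend on $n$ and on the random tape. On input $x$ of length $n$, let $p(n)$ be a polynomial bounding the lengths of queries $S$ issues on $x$; using paddability, replace every such query $q$ by a padded query of length $p(n)$ that preserves $L$-membership. Advice tuned to the single length $p(n)$ now suffices, and there are only $m = 2^{O(\log p(n))} = n^{O(1)}$ possibilities. For each candidate advice $\alpha$, define a candidate oracle whose answer on a padded query $q'$ is the majority over many independent runs of $M^R(q', \cdot, \alpha)$. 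I would then feed these $m$ candidate oracles into the many-oracles selector provided by Lemma~\ref{lemma:robustmany} to obtain the desired $\BPP^R$ algorithm for $L$.

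The main obstacle is to show that among the $m$ candidate oracles at least one is honest on every query $S$ actually issues on $x$, because the $\doubleslash$ convention allows the good advice to vary with the random tape of $M$, so there need not be any single globally-good advice a priori. My plan is to invoke the standard amplification machinery of Trevisan and Vadhan~\cite{TV07}: after boosting $M$ by majority voting over polynomially many independent trials and using paddability to align lengths, the $\doubleslash \log$ hypothesis can be massaged into yielding a single advice string at length $p(n)$ whose boosted behavior is correct on every padded query $S$ might issue, with overwhelming probability. A union bound over the polynomially many queries of $S$ then guarantees that the corresponding candidate oracle is honest throughout the execution, so that Lemma~\ref{lemma:robustmany} applies and concludes the simulation.
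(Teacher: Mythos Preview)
Your Part~\ref{enum:elim2selector} has a genuine gap. The hypothesis ``for every $R$, $L \in \PTIME^R/1$ implies $L \in \BPP^R$'' does \emph{not} uniformly hand you a single probabilistic oracle machine that works for all admissible $R$; it only says that for each fixed $R$ there exists some (possibly $R$-dependent) $\BPP$ machine deciding $L$. Asserting that the uniform reading is ``standard'' does not make it so, and without uniformity your encoding argument produces nothing. The paper avoids this by proving the contrapositive via diagonalization: assuming $L$ has no selector, for every probabilistic polynomial-time oracle machine $M_e$ there exist a pair $A_0, A_1$ with $L \in \{A_0, A_1\}$ and an input $x^{(e)}$ on which $M_e^{A_0,A_1}$ fails. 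One splices these witnesses together in stages into a single pair $R_0, R_1$ (letting the honest side alternate between $R_0$ and $R_1$ from stage to stage) so that $L \in \PTIME^{R_0,R_1}/1$ while every $M_e^{R_0,R_1}$ is fooled at stage $e$, hence $L \notin \BPP^{R_0,R_1}$. The diagonalization is precisely what absorbs the non-uniform dependence of the $\BPP$ machine on $R$.

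Your Part~\ref{enum:selector2elim} has the right skeleton---pad all selector queries to a single length $l(n)$, enumerate the $m = 2^{O(\log l(n))}$ advice candidates, and invoke Lemma~\ref{lemma:robustmany}---but your construction of the candidate oracles does not work. Defining $A_\alpha(q)$ as the majority of $M^R(q, r, \alpha)$ over many independent tapes $r$ need not produce any honest oracle: the $\doubleslash$ definition allows $\alpha(r)$ to be spread uniformly over all $m$ values, so that for every fixed $\alpha$ only a $\Theta(1/m)$ fraction of tapes are good and the majority is dictated by the adversarial remainder. Amplifying $M$ does not help either, since repeating with independent tapes $r_1,\ldots,r_k$ requires advice $(\alpha(r_1),\ldots,\alpha(r_k))$ and blows up the advice length. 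The paper's fix is much simpler and sidesteps your ``main obstacle'' entirely: draw a \emph{single} random tape $r$ once, and set $A_i(q) := M^R(q, r, i)$ for each advice value $i$. By~\eqref{eq:bppadvicesecond}, with probability at least $5/6$ this $r$ is good, and then the single oracle $A_{\alpha(r)}$ is honest on all inputs of length $l(n)$; Lemma~\ref{lemma:robustmany} then applies directly, with no majority voting or Trevisan--Vadhan massaging required.
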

As a corollary, we immediately obtain Theorem \ref{thm:characterizationelim} (note that $\PTIME^R / 1 \subset \BPP^R \doubleslash \log$).

\begin{proof}
  \hspace{0em}
  \subparagraph*{Part \ref{enum:selector2elim}.}
  Let $M$ be a polynomial-time oracle machine which witnesses $L \in \BPP^R\doubleslash a$,
  where $a(n) = O(\log n)$.
  That is,
  there exists an advice function $\alpha \colon \binstr \to \binstr$ such that, for every $n \in \N$,
  \begin{align}
    \label{eq:bppadvicesecond}
    \Pr_{r \in \binset^{t(n)}} \left[ \forall x \in \binset^n,  \ M^R(x, r, \alpha(r)) = L(x) \right] \ge \frac56,
  \end{align}
  where $|\alpha(r)| = a(n)$ and $t$ is a polynomial (see also \cite[Definition 5.1]{TV07}).

  Let $l(n) \supplement{= n^{O(1)}}$ be an upper bound on the running time of a selector for $L$ on inputs of length $n$.
  By Lemma \ref{lemma:robustmany}, 
  there exists a selector $S$ that can identify an honest oracle among $m(n)$ oracles for $m(n) := 2^{a(l(n))} = n^{O(1)}$
  with probability at least $\frac56$.
  By padding, we may assume that $S$ makes only queries of length exactly $l(n)$ on each input length $n \in \N$
 
  Consider the following probabilistic algorithm:
  On input $x \in \binset^n$,
  pick a string $r \in_R \binset^{t(l(n))}$ uniformly at random,
  and define oracles by $A_i(q) := M^R(q, r, i)$ for any $q \in \binset^{l(n)}$,
  where $i \in \NumSet{m(n)}$ is identified with $i \in \binset^{a(l(n))}$.
  Simulate $S$ on input $x$, answering its queries $q \in \binset^{l(n)}$ to $A_i$ by computing $M^R(q, r, i)$.

  If a ``good'' string $r$ is picked (whose probability is at least $\frac56$ by \eqref{eq:bppadvicesecond}),
  then
  we have $A_i(q) = M^R(q, r, i) = L(q)$ for any $q \in \binset^{l(n)}$, where $i = \alpha(r)$.
  That is, $A_i$ is honest for some $i$ with probability at least $\frac56$.
  Thus,
  the algorithm computes $L$ correctly with probability at least $1 - \frac16 - \frac16 = \frac23$.

  \subparagraph*{Part \ref{enum:elim2selector}.}
  We prove the contraposition.
  Assume that $L$ does not have any selectors.

  Recall that
  we regard the computation given oracle access to two oracles $R_0, R_1$,
  namely $M^{R_0, R_1}$,
  as
  $M^R$ where $R(i \cdot q) = R_i(q)$ for each $i \in \binset$.
  Thus, the goal is to show that
  there exist oracles $R_0, R_1 \subset \binstr$ such that
  $L \in \PTIME^{R_0, R_1} / 1$ and $L \not\in \BPP^{R_0, R_1}$.

  We use a diagonalization argument on all the probabilistic polynomial-time oracle machine $M_1, M_2, \cdots$.
  We construct $R_0 \stage e , R_1 \stage e$ at stage $e \in \N$,
  and then define $R_i := \bigcup_e R_i \stage e$ for each $i \in \binset$.

  We will construct them so that,
  for each $n \in \N$,
  there exists $j_n \in \binset$ such that $R_{j_n}(q) = L(q)$ for any $q \in \binset^n$.
  Thus, $L \in \PTIME^{R_0, R_1} / 1$ holds because
  we can make a query $x$ to obtain $R_{j_n}(x) = L(x)$
  with advice $\{j_n\}_{n\in\N}$ of one bit.

  Let us now construct $R_0 \stage e, R_1 \stage e$, and $l \stage e \in \mathbb Z$,
  where $l \stage e$ represents the maximum length of the strings that have been fixed.
  At stage $e = 0$, we set $R_0 \stage 0 = R_1 \stage 0 = \emptyset$, and $l \stage 0 := -1$.

  At stage $e \ge 1$,
  we claim that $R_0 \stage {e-1}$ and $R_1 \stage {e-1}$ can be extended
  so that some input $x \stage e$ can fool $M_e$:
  \begin{claim}
    \label{claim:diagonalization}
    For each $e \ge 1$,
    there exist oracles $A_0, A_1 \subset \binstr$ and a string $x \stage e \in \binstr$  such that
    \begin{enumerate}
      \item
        \label{enum:consistentshort}
        $A_i$ agrees with $R_i \stage {e-1}$ on all the strings of length at most $l \stage {e-1}$
        for each $i \in \binset$,
      \item
        \label{enum:honestlong}
        either $A_0$ or $A_1$ agrees with $L$ on all the strings of length greater than $l \stage {e-1}$,
        and
      \item
        \label{enum:notselector}
        $\Pr \left[ M_e^{A_0, A_1} (x \stage e) = L(x \stage e) \right] < \frac 2 3$.
    \end{enumerate}
  \end{claim}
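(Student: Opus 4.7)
The plan is to prove the claim by contradiction: assume that for some $e \geq 1$, no triple $(A_0, A_1, x^{(e)})$ satisfies conditions~\ref{enum:consistentshort}, \ref{enum:honestlong}, and \ref{enum:notselector} simultaneously. In other words, assume that for every oracle pair $(A_0, A_1)$ that meets conditions~\ref{enum:consistentshort} and \ref{enum:honestlong} and every input $x$, we have $\Pr[M_e^{A_0, A_1}(x) = L(x)] \geq \frac{2}{3}$. From this I plan to extract an honest selector for $L$, contradicting the standing hypothesis of Part~\ref{enum:elim2selector} that $L$ admits no selector.

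The central observation is that $l^{(e-1)}$ is a fixed integer by the time we reach stage $e$, so the restrictions of $R_0^{(e-1)}$, $R_1^{(e-1)}$, and of $L$ itself to strings of length at most $l^{(e-1)}$ are finite tables that can be hardwired into the code of a prospective selector $S$. On input $x$ with oracle access to $B_0, B_1$ (one of which equals $L$), $S$ will behave as follows: if $|x| \leq l^{(e-1)}$ it outputs the hardwired value $L(x)$; otherwise it simulates $M_e^{A_0, A_1}(x)$, where each query $q$ to the $i$-th oracle is answered by the hardwired $R_i^{(e-1)}(q)$ when $|q| \leq l^{(e-1)}$ and by $B_i(q)$ when $|q| > l^{(e-1)}$. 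Since the tables have size depending only on $e$, $S$ runs in polynomial time.

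To verify correctness, suppose $B_{i^\star} = L$ for some $i^\star \in \binset$. Then the implicit oracles $A_0, A_1$ that $S$ consults match $R_i^{(e-1)}$ on all strings of length at most $l^{(e-1)}$ by construction (condition~\ref{enum:consistentshort}), and $A_{i^\star}$ equals $L$ on every string of length greater than $l^{(e-1)}$ (condition~\ref{enum:honestlong}). The contradiction hypothesis then forces $M_e^{A_0, A_1}(x) = L(x)$ with probability at least $\frac{2}{3}$, so $S$ correctly computes $L(x)$ with probability at least $\frac{2}{3}$ on every $x$, making it a selector for $L$---a contradiction. The main conceptual point of the argument is simply that the finitely many bits of $R_0^{(e-1)}, R_1^{(e-1)}$ fixed at earlier stages can be absorbed into the selector's description; the only subtlety to check is that this hardwired portion really is of constant size, which is immediate from $l^{(e-1)}$ being a single fixed integer at stage $e$.
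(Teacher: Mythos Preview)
Your proposal is correct and follows essentially the same approach as the paper: argue by contradiction, hardwire the finitely many values of $R_0^{(e-1)}$ and $R_1^{(e-1)}$ on strings of length at most $l^{(e-1)}$ into a prospective selector, simulate $M_e$ with the patched oracles, and invoke the contradiction hypothesis. Your additional step of hardwiring $L$ on inputs of length at most $l^{(e-1)}$ is harmless but unnecessary, since the contradiction hypothesis already covers all $x$ once conditions~\ref{enum:consistentshort} and \ref{enum:honestlong} are met.
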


  \begin{proof}
    [Proof of Claim \ref{claim:diagonalization}]
    Assume otherwise.
    That is, for any oracles $A_0, A_1 \subset \binstr$ and string $x \in \binstr$,
    we have $\Pr \left[ M_e^{A_0, A_1} (x) = L(x) \right] \ge \frac 2 3$ if
    Properties \ref{enum:consistentshort} and \ref{enum:honestlong} hold.
    Then, the following algorithm yields a selector for $L$, which contradicts the assumption:
    we hardwire all the strings in $R_i \stage {e-1}$ of length at most $l \stage {e-1}$ into a table;
    given oracles $A_0, A_1$ one of which agrees with $L$,
    we simulate $M_e$, answering its queries $q$ to $A_i \supplement{ i \in \binset }$ with
    the content of the table if $|q| \le l \stage {e-1}$ and
    with $A_i(q)$ otherwise.
  \end{proof}

  Define $l \stage e \supplement{ > l \stage {e-1} }$ as an upper bound on
  the length of the queries that 
  $M_e^{A_0, A_1}( x \stage e ) $ makes.
  Then, define $R_i \stage {e}$ as
  $R_i \stage e (q) :=  R_i \stage {e-1} (q) = A_i (q)$ if $|q| \le l \stage {e-1}$;
  $R_i \stage e (q) :=   A_i (q)$ if $l \stage {e-1} < |q| \le l \stage{e}$;
  and
  $R_i \stage e (q) = 0$ otherwise,
  for each $q \in \binstr$.
  This completes the construction of stage $e$.

  On one hand,
  $x \stage e$ witnesses $M_e^{R_0, R_1}$ not computing $L$ on input $x \stage e$
  for any $e \ge 1$, by Property \ref{enum:notselector};
  thus, we have $L \not\in \BPP^{R_0, R_1}$.
  On the other hand,
  for each input length $n \in \N$,
  either $R_0$ or $R_1$ agrees with $L$ on $\binset^n$,
  by Property \ref{enum:honestlong}; thus, we have $L \in \PTIME^{R_0, R_1}/1$.
\end{proof}

\begin{remark}
  Again, the analogous statement (Theorem \ref{thm:characterizationdet}) holds for a deterministic selector.
  A proof is essentially the same and hence is omitted.

  One can also prove the quantitative version (Corollary \ref{cor:quantitative}) of Part \ref{enum:selector2elim} of Theorem \ref{thm:finercharact}
  by changing parameters in the proofs of Theorem \ref{thm:finercharact} and Lemma \ref{lemma:robustmany}.
\end{remark}

\section{Nonadaptive Deterministic Selector}
\label{sec:truth}
In this section we prove Theorem \ref{thm:honestytt}.

We first prove Part \ref{enum:honestyttlower} of Theorem \ref{thm:honestytt},
which states that \emph{every} $\PNP$-complete language has a nonadaptive deterministic selector.
It is sufficient to show that a \emph{specific} $\PNP$-complete language has a selector (recall Proposition \ref{prop:closedequiv} and Remark \ref{remark:closedequiv}).
We construct a nonadaptive deterministic selector for
the following canonical $\PNP$-complete language (see \cite{Kre88} for a proof of its completeness).

\begin{definition}[Lexicographically Maximum Satisfying Assignment; Krentel~\cite{Kre88}]
  The \emph{lexicographically maximum satisfying assignment problem} contains
  all the pairs $(\phi, k)$ such that $\phi \colon \binset^n \to \binset$ is a satisfiable Boolean formula in $n$ variables for some $n \in \N$,
  and
  $a_k = 1$, where $a_1 \cdots a_n \in \binset^n$ denotes the lexicographically maximum satisfying assignment of $\phi$.
\end{definition}

In other words, the lexicographically maximum satisfying assignment problem is
the decision version of the problem of answering, given a Boolean formula $\phi$ in $n$ variables,
the lexicographically maximum satisfying assignment if $\phi$ is satisfiable and $0^n$ otherwise.
Note that it is implicit in the definition that the answer is $0^n$ for an unsatisfiable Boolean formula.

\begin{proof}
  [Proof of Part \ref{enum:honestyttlower} of Theorem \ref{thm:honestytt}]
  We show an algorithm of a selector for the lexicographically maximum satisfying problem, together with its analysis.
  Let us call two oracles $A_0$ and $A_1$.

  On input $(\phi, k)$, the set of all the queries that we make is $\ESet{ (\phi, j) \mid j \in \NumSet n }$,
  where $n \in \N$ is the number of variables in $\phi$.
  The (presumably) lexicographically maximum satisfying assignment asserted by each oracle $A_i \ (i \in \binset)$
  can be obtained by concatenating the answers of the oracle, namely $A_i(\phi, 1) \cdot A_i(\phi, 2) \cdots A_i(\phi, n) =: \pa i \in \binset^n$.

  If the $k$th bits of $\pa0$ and $\pa1$ agree, then we simply output it because the oracles agree on input $(\phi, k)$.

  Otherwise $\pa0$ is not equal to $\pa1$.
  Therefore, we may assume without loss of generality that $\pa0 < \pa1$.
  We check whether $\pa1$ is a satisfying assignment or not by evaluating $\phi(\pa1)$.
  If $\phi(\pa1) = 1$, then we trust the oracle $A_1$ and output $A_1(\phi, k)$
  because $A_1$ showed a satisfying assignment larger than $\pa0$;
  otherwise we doubt $A_1$ and output $A_0(\phi, k)$
  because $A_1$ tried to cheat us by answering an unsatisfying assignment.
\end{proof}

Then we show that any language with a nonadaptive deterministic selector is in $\SH2$.
\begin{proof}
  [Proof of Part \ref{enum:honestyttupper} of Theorem \ref{thm:honestytt}]
  Let $L$ be a language with a nonadaptive deterministic selector $S$.
  We claim that $L$ is in $\SH2$.
  Let $Q(x) = \{q_1, \cdots, q_m\}$ be the query set of $S$ on input $x \in \binstr$.

  We consider the following polynomial-time machine $M$:
  Suppose that the input to $M$ is $(x, y, z) \in \binset^n \times \binset^m \times \binset^m$.
  Let $y = y_1 \cdots y_m$ and $z = z_1 \cdots z_m$.
  $M$ simulates the selector $S$ on input $x$.
  If $S$ makes a query $q_i$ to the oracle $A_0$,
  then it is answered with $y_i$.
  Similarly, if $S$ makes a query $q_i$ to the oracle $A_1$,
  then it is answered with $z_i$.

  Then, there exists $y \in \binset^m$ such that $M(x, y, z) = L(x)$ for any $z \in \binset^m$.
  Indeed,
  if $y$ is the concatenation of $L(q_1), \cdots, L(q_m)$,
  then by the definition of a nonadaptive deterministic selector, $M(x, y, z)$ correctly outputs $L(x)$ for any $z \in \binset^m$,
  because all the queries that $S$ makes to $A_0$ are answered correctly.
  Similarly, there exists $z \in \binset^m$ such that $M(x, y, z) = L(x)$ for any $y \in \binset^m$.
\end{proof}

\section{Probabilistic Selector}
\label{sec:bpp}

In this section we investigate probabilistic selectors.

First, we show that probabilistic selectors can be constructed based on instance checkers.
An instance checker is formally defined as follows:
\begin{definition}[Instance Checker~\cite{BK95}]
  An \emph{instance checker} $C$ for a language $L$ is a probabilistic polynomial-time oracle machine such that,
  given any oracle $A \subset \binstr$,
  \begin{enumerate}
    \item
      if $A = L$ then $C^A$ accepts with high probability, \ie $\Pr \left[ C^A(x) = 1 \right] \ge \frac23$ on all the input $x \in \binstr$, and
    \item
      for any input $x \in \binstr$, if $A(x) \neq L(x)$ then $C^A(x)$ rejects with high probability, \ie
      $\Pr \left[ C^A(x) = 0 \right] \ge \frac23$,
  \end{enumerate}
  where the probability is taken over coin flips of $C$.
\end{definition}

\begin{proposition}
  Every language with an instance checker has a selector.
\end{proposition}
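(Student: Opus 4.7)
The plan is to let the selector run the instance checker on each of the two given oracles and trust whichever one the checker endorses. Let $C$ be an instance checker for $L$, and suppose the oracles are $A_0, A_1 \subset \binstr$ with at least one of them equal to $L$.

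On input $x$, the selector first queries $a_0 := A_0(x)$ and $a_1 := A_1(x)$. If $a_0 = a_1$, it outputs this common value, which must equal $L(x)$ since at least one of the oracles is honest. Otherwise $a_0 \neq a_1$, so the dishonest oracle disagrees with $L$ on $x$. In that case the selector runs $C^{A_0}(x)$ and $C^{A_1}(x)$, each amplified by independent repetitions so that the error of each invocation is at most $1/6$, and outputs $a_i$ for the index $i \in \binset$ whose amplified checker accepts.

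For correctness, let $A_i$ denote the honest oracle. Since $A_i = L$, the first property of the instance checker gives $\Pr[C^{A_i}(x) = 1] \ge 5/6$. Since $a_{1-i} \neq a_i = L(x)$, we have $A_{1-i}(x) \neq L(x)$, and the second property gives $\Pr[C^{A_{1-i}}(x) = 0] \ge 5/6$. A union bound then shows the selector outputs $a_i = L(x)$ with probability at least $2/3$.

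The one point to be careful about is that the specification of an instance checker says nothing about $C^A$ when $A \neq L$ yet $A(x) = L(x)$: in that regime the checker is free to accept or reject arbitrarily, and we have no handle with which to distinguish the two oracles. This is exactly the case that the initial disagreement test $a_0 \neq a_1$ eliminates, forcing the dishonest oracle into the regime $A(x) \neq L(x)$ where the instance checker is guaranteed to reject. Apart from this observation, the proof is an essentially mechanical combination of the two defining properties of an instance checker.
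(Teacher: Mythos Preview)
Your proof is correct and follows essentially the same idea as the paper's. The paper's version is a touch more economical: it runs the checker only on $A_0$ (unamplified) and outputs $A_0(x)$ if the checker accepts and $A_1(x)$ otherwise, which absorbs your explicit disagreement test into the case analysis and avoids the union bound; note also that your rule ``output $a_i$ for the $i$ whose checker accepts'' is underspecified when both or neither accept, but any tie-break works since your union bound already guarantees the good event with probability at least $2/3$.
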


\begin{proof}
  Suppose that a language $L$ has an instance checker $C$.
  Given input $x \in \binstr$ and two oracles $A_0, A_1 \subset \binstr$, we check which is honest, $A_0$ or $A_1$,
  by computing $C^{A_0}(x)$.
  If $C^{A_0}(x)$ accepts, then we trust $A_0$ and output $A_0(x)$; otherwise we doubt $A_0$ and output $A_1(x)$.

  Let us analyze the algorithm above.
  If $A_0 = L$, then $C^{A_0}(x)$ accepts with probability at least $\frac23$, and hence we can output $A_0(x) = L(x)$ correctly
  with probability at least $\frac23$.

  Otherwise, it must hold that $A_1 = L$.
  If $A_0(x) = L(x)$, then we can surely output $L(x)$ correctly since $A_0(x) = A_1(x) = L(x)$.
  If $A_0(x) \neq L(x)$, then $C^{A_0}(x)$ rejects with probability at least $\frac23$,
  and thus we can output $A_1(x) = L(x)$ correctly with probability at least $\frac23$.
\end{proof}

Next, we show an upper bound on the languages with a probabilistic selector.
For completeness, we include a definition of $\SHEXP$,
which is a straightforward exponential-time analogue of $\SH2$:
\begin{definition}
  \label{def:shexp}
  We say that a language $L$ is in $\SHEXP$ if there exist
  a time-constructible function $t(n) = 2^{n^{O(1)}}$ and
  a Turing machine $M$ running in time $2^{|x|^{O(1)}}$ on input $(x, \cdot, \cdot)$
  such that, for any input $x \in \binstr$,
  \begin{align*}
    \exists y \in \binset^{t(|x|)}, \forall z \in \binset^{t(|x|)}, \  M(x, y, z) = L(x), \\
    \exists z \in \binset^{t(|x|)}, \forall y \in \binset^{t(|x|)}, \  M(x, y, z) = L(x).
  \end{align*}
\end{definition}
The proof itself is essentially a corollary of Part \ref{enum:honestyttupper} of Theorem \ref{thm:honestytt}:
\begin{proof}[Proof of Part \ref{enum:honestyupper} of Theorem \ref{thm:honesty}]
  Notice that a probabilistic selector can be simulated by an exponential-time nonadaptive deterministic selector.
  In addition, every language with an exponential-time nonadaptive deterministic selector is in $\SHEXP$,
  which is an exponential-time analogue of Part \ref{enum:honestyttupper} of Theorem \ref{thm:honestytt}.
  Combining these two facts, it follows that every language with a probabilistic selector is in $\SHEXP$.
\end{proof}

\subsection{Selector for $\EXPNP$-complete Languages}
In this subsection we prove the main theorem (Theorem \ref{thm:honesty}, Part \ref{enum:honestylower}).
That is, we construct a selector for $\EXPNP$-complete languages.

\subsection*{Proof Sketch}
We sketch the proof of the main theorem.
We will construct a selector for a specific $\EXPNP$-complete language, which is
a problem of finding
the lexicographically maximum satisfying assignment
of a succinctly described Boolean formula $F_\Phi \colon \binset^{2^n} \to \binset$.
The basic strategy to construct a selector for this language is
the same with that of Part \ref{enum:honestyttlower} of Theorem~\ref{thm:honestytt}:
Given access to two oracles $A_0, A_1 \subset \binstr$,
we request them to reveal the presumably lexicographically maximum satisfying assignments $V_0, V_1 \in \binset^{2^n}$ asserted by $A_0, A_1$, respectively.
The rest of the algorithm consists of two parts:
First, we determine the larger assignment of $V_0$ and $V_1$, checking whether $V_0 < V_1$ or $V_0 > V_1$.
Second, we verify whether the larger assignment satisfies the formula $F_\Phi$ or not.
Obviously, the obstacle is that there can be exponentially many variables and clauses in $F_\Phi$.

For the second part,
Babai, Fortnow, and Lund~\cite{BFL91} showed that, given access to provers (or, equivalently, an oracle),
one can efficiently check that exponentially many constraints in $F_\Phi$ are satisfied:
basically,
by encoding an assignment as a multilinear function and using arithmetization,
it holds that the assignment satisfies all the clauses in $F_\Phi$  if and only if the sum of some low-degree polynomials (that can be computed by the multilinear function and the arithmetization) over a subdomain $\binset^l$ is equal to $0$,
and the latter can be verified by using the sum-check protocol~\cite{LFKN92} (called the LFKN protocol in \cite{BFL91}).
As pointed out by G\'abor Tardos~\cite{BFL91},
since $\EXPNP$ is capable of finding a satisfying assignment of an exponential-sized Boolean formula, the honest oracle in the protocol above can be implemented in $\EXPNP$;
thus, given access to an honest $\EXPNP$-complete oracle (which is $A_0$ or $A_1$), one can verify the satisfiability.

For the first part, we perform a binary search to obtain the lexicographically first index $z$
such that $V_0$ and $V_1$ disagree.
Thus, we need
\begin{enumerate}
  \item
    \label{enum:binaryeq}
    to check if $V_0 = V_1$ on some range of indices, and
  \item
    \label{enum:binarysplit}
    to split the range into two parts.
\end{enumerate}
We observe that these can be done if we encode a satisfying assignment by the multilinear extension (as with \cite{BFL91}):
Let $\mathbb F$ be a finite field.
We regard the assignments $V_0, V_1 \in \binset^{2^n}$ as vectors in $\mathbb F^{2^n}$.
There is a bijective correspondence between a vector $V \in \mathbb F^{2^n}$ and a multilinear function $\widetilde V \colon \mathbb F^n \to \mathbb F$.
For example, if $n = 2$ and $V = (V_{00}, V_{01}, V_{10}, V_{11})$,
then
\[
  \widetilde V (x_1, x_2) = V_{00} (1 - x_1) (1 - x_2) + V_{01} (1 - x_1) x_2 + V_{10} x_1 (1 - x_2) + V_{11} x_1 x_2.
\]

For Part \ref{enum:binaryeq}, we can rely on the polynomial identity testing:
indeed, since the multilinear extension is bijective, we have $V_0 \neq V_1$ if and only if these multilinear extensions $\widetilde V_0$ and $\widetilde V_1$ differ;
thus, it is sufficient to check if the two low-degree polynomials $\widetilde V_0$ and $\widetilde V_1$ differ.

It is well known that, given access to two low-degree polynomials, one can efficiently check if these polynomials differ:
given access to two functions $\widetilde V_0, \widetilde V_1$,
pick a random point $u \in_R \mathbb F^n$ and check if $\widetilde V_0(u) \neq \widetilde V_1(u)$.
Assuming that the functions are low-degree (which is true if they are multilinear),
the Schwartz-Zippel lemma assures that $\widetilde V_0$ and $\widetilde V_1$ disagree on a large fraction of inputs if $\widetilde V_0 \neq \widetilde V_1$.
Although it is possible that a dishonest oracle tries to cheat us by storing a high-degree polynomial,
we can check whether or not the function stored by an oracle is close to some multilinear function,
by using the multilinearity test~\cite{BFL91}.

For Part \ref{enum:binarysplit}, we use the following simple fact:
Fixing the first variable of a multilinear extension $\widetilde V$ to $0$ or $1$,
we obtain multilinear extensions that correspond to the first or second part of $V$.
In the example above, we obtain two multilinear functions:
\begin{align*}
  \widetilde V(0, x_2) = V_{00} (1 - x_2) + V_{01} x_2, \quad   \widetilde V(1, x_2) = V_{10} (1 - x_2) + V_{11} x_2.
\end{align*}
These correspond to multilinear extensions of $(V_{00}, V_{01})$ and $(V_{10}, V_{11})$, respectively, for $n = 1$.
Thus, we can recursively compute the lexicographically first disagreement.

\subsection*{Proof of the Main Theorem}
Now we move on to the proof of the main theorem.
We construct a selector for the following $\EXPNP$-complete language,
which is an analogue of the $\NEXP$-complete languages called the oracle-3-satisfiability problem in \cite{BFL91}.
\begin{definition}
  [Lexicographically Maximum Oracle-3-satisfying Assignment]
  Let $m, n$ be nonnegative integers, and $\Phi \colon \binset^{m+3n+3} \to \binset$ be a Boolean formula.
  For a Boolean function $X \colon \binset^n \to \binset$,
  define $\satinstance \Phi X$ as the following Boolean formula:
  \begin{align*}
    \bigwedge_{w \in \binset^{m+3n}} \Phi(w, X(b_1), X(b_2), X(b_3)),
  \end{align*}
  where $w = (y, (b_1, b_2, b_3)) \in \binset^m \times \left( \binset^n \right)^3$.
  A Boolean function $X \colon \binset^n \to \binset$ is said to be an assignment of $F_\Phi$.
  For assignments $X, Y \colon \binset^n \to \binset$, we introduce the lexicographical ordering:
  $X$ is less than $Y$ if there exists an index $b \in \binset^n$ such that $X(b) < Y(b)$ and $X(b') = Y(b')$ for any $b' < b$.
  Let $\lexass \colon \binset^n \to \binset$ denote the lexicographically maximum assignment such that $\satinstance \Phi \lexass = 1$
  (\ie the lexicographically maximum satisfying assignment of $F_\Phi$);
  if there is no satisfying assignment, then define $\lexass(b) = 0$ for any $b \in \binset^n$.

  The \emph{lexicographically maximum oracle-3-satisfying assignment} is a problem
  of answering $\lexass(\binput)$, given 
  nonnegative integers $m, n$, a Boolean formula $\Phi \colon \binset^{m+3n+3} \to \binset$, and an index $\binput \in \binset^n$ as input.
\end{definition}
We omit a proof of $\EXPNP$-completeness because this is a simple exponential-time analogue of the lexicographically maximum satisfying assignment
language~\cite{Kre88} (see also \cite{BFL91}).

Suppose that the input is a Boolean formula $\Phi \colon \binset^{m+3n+3} \to \binset$ and an index $\binput$, and that we have access to two oracles $A_0$ and $A_1$,
one of which is honest.

\subsection*{Encoding Assignments by the Multilinear Extension}
As with the proof of $\MIP = \NEXP$~\cite{BFL91}, we encode a satisfying assignment by the multilinear extension.
Let $\mathbb F$ be a prime field such that $|\mathbb F|$ is sufficiently large (but is bounded by a polynomial in the input size).
We regard $\binset \subset \mathbb F$ in the canonical way.
We say that a function $f \colon \mathbb F^n \to \mathbb F$ is multilinear if it is a polynomial of degree at most $1$ in each variable.
\begin{proposition}
  [Multilinear Extension]
  \label{prop:multilinearextension}
  Let $f \colon \binset^n \to \mathbb F$ be an arbitrary function.
  Then, there exists a unique multilinear function $\widetilde f \colon \mathbb F^n \to \mathbb F$ such that
  $f$ and $\widetilde f$ agree on $\binset^n$.
\end{proposition}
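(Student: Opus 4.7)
The plan is to prove both existence and uniqueness explicitly, since the proposition makes both claims about the multilinear extension. For existence, I will give an explicit formula; for uniqueness, I will use a short induction argument (though a dimension-counting argument would also work).

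For existence, I will exhibit $\widetilde f$ by Lagrange-style interpolation on the Boolean hypercube. Concretely, for each $b = (b_1, \ldots, b_n) \in \binset^n$ define the ``indicator'' polynomial
\[
\chi_b(x_1, \ldots, x_n) := \prod_{i=1}^n \bigl( b_i x_i + (1 - b_i)(1 - x_i) \bigr),
\]
which is multilinear (each $x_i$ appears to degree at most $1$), takes the value $1$ at the point $b$, and vanishes on every other point of $\binset^n$. Then
\[
\widetilde f(x_1, \ldots, x_n) := \sum_{b \in \binset^n} f(b) \, \chi_b(x_1, \ldots, x_n)
\]
is multilinear (it is a sum of multilinear functions) and agrees with $f$ on $\binset^n$ by construction.

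For uniqueness, suppose $g \colon \mathbb F^n \to \mathbb F$ is another multilinear function agreeing with $f$ on $\binset^n$. Setting $h := \widetilde f - g$, the function $h$ is multilinear and identically zero on $\binset^n$; it suffices to show $h \equiv 0$ on $\mathbb F^n$. I will argue by induction on $n$. The base case $n = 0$ is trivial. For the inductive step, since $h$ is multilinear in $x_1$, write
\[
h(x_1, \ldots, x_n) = (1 - x_1)\, h_0(x_2, \ldots, x_n) + x_1 \, h_1(x_2, \ldots, x_n),
\]
where $h_0$ and $h_1$ are multilinear in $n - 1$ variables. Substituting $x_1 = 0$ gives $h_0(x_2, \ldots, x_n) = h(0, x_2, \ldots, x_n)$, which vanishes on $\binset^{n-1}$; by the inductive hypothesis $h_0 \equiv 0$. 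An identical argument with $x_1 = 1$ yields $h_1 \equiv 0$, and hence $h \equiv 0$.

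There is no serious obstacle here: the proposition is essentially standard multivariate Lagrange interpolation restricted to the Boolean cube, and the only mild care needed is to verify that the indicator polynomials $\chi_b$ are genuinely multilinear (which is immediate from their product form) and that the inductive decomposition of a multilinear function with respect to a single variable is valid over an arbitrary field. Both are routine.
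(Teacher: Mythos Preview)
Your proposal is correct and takes essentially the same approach as the paper: the explicit interpolation formula you give is identical to the one the paper states (just with the factors written in a different order). The paper's proof is only a sketch that records this formula and defers to \cite{BFL91} for details, so your explicit uniqueness argument by induction on $n$ goes beyond what the paper writes out but is entirely standard and correct.
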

\begin{proof}
  [Proof Sketch]
  For a complete proof, the reader is referred to \cite[Proposition 4.4]{BFL91}.
  Here, we note that the extension $\widetilde f$ can be explicitly written as
  \begin{align}
    \label{eq:extension}
    \widetilde f(x) = \sum_{b \in \binset^n} f(b) \prod_{i = 1}^n \left( (1 - x_i) (1 - b_i) + x_i b_i \right),
  \end{align}
  where $b = (b_1, \cdots, b_n)$ and $x = (x_1, \cdots, x_n) \in \mathbb F^n$.
\end{proof}

For the lexicographically maximum satisfying assignment $\lexass \colon \binset^n \to \binset \subset \mathbb F$,
let $\widetilde\lexass \colon \mathbb F^n \to \mathbb F$ denote its multilinear extension.

We request the oracles to grant local access to $\widetilde\lexass$.
Formally, we consider the following search problem:
given a Boolean formula $\Phi$, a prime $|\mathbb F|$, and $x \in \mathbb F^n$,
the task is to output the value $\widetilde \lexass(x)$.
We regard this problem as a decision problem in the standard way.
(Specifically, given the inputs specified above and auxiliary inputs $k \in \N$ and $b \in \binset$,
the task is to output one bit saying whether or not the $k$th bit of a binary representation of $\widetilde \lexass(x)$ is $b$.)
The problem is still solvable in $\EXPNP$,
by first computing $V_\Phi$ in $\EXPNP$ and then computing the expression \eqref{eq:extension} straightforwardly in exponential time.

Therefore, the problem can be reduced to the original $\EXPNP$-complete problem;
by using the $\EXPNP$-completeness,
one can translate the problem of computing $\widetilde \lexass(x)$ into the original problem in polynomial time,
and hence we can ask the oracles to output $\widetilde \lexass(x)$.
Let $f_0, f_1 \colon \mathbb F^n \to \mathbb F$ denote the answers of the oracles $A_0, A_1$, respectively.
Then, we have $f_i = \widetilde \lexass$ for an honest oracle $A_i$.

Although $f_i$ is not necessarily multilinear for a dishonest oracle $A_i$,
we can ensure that it is close to some multilinear function.
This can be done by the multilinearity test, which was one of the main technical ingredients in the proof of $\MIP = \NEXP$~\cite{BFL91}.
For two functions $f, g \colon \mathbb F^n \to \mathbb F$ and a real number $\delta \in \mathbb R$,
we say that $f$ and $g$ are $\delta$-close if $\Pr_{x \in \mathbb F^n} [ f(x) \neq g(x) ] < \delta$.
\begin{lemma}
  [Multilinearity Test \cite{BFL91}]
  Let $n \in \N$ and $\mathbb F$ be a finite field.
  There exist a constant $\delta = n^{O(1)}/|\mathbb F|$ and  an efficient probabilistic algorithm that, given oracle access to an arbitrary function $f \colon \mathbb F^n \to \mathbb F$,
  \begin{enumerate}
    \item
      accepts with probability $1$ if $f$ is multilinear, and
    \item
      rejects with high probability if $f$ is not $\delta$-close to any multilinear function.
  \end{enumerate}
\end{lemma}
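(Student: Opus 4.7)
The plan is to use the axis-parallel line test of Babai--Fortnow--Lund and its self-correction analysis. Since multilinearity means degree at most $1$ in each variable separately, it suffices to verify univariate affinity along random lines parallel to each coordinate axis.

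Concretely, for each $i \in \ESet{1, \ldots, n}$ and $\mathrm{poly}(n)$ many repetitions, the test samples a uniform $x \in \mathbb F^n$ and two independent field elements $s, t \in \mathbb F$, queries $f$ at the three points obtained from $x$ by overwriting the $i$-th coordinate with $x_i$, $s$, and $t$, and accepts iff these three values lie on a common affine polynomial in the $i$-th coordinate (equivalently, iff the affine interpolation through the values at $s$ and $t$ agrees with $f$ at $x_i$). Completeness is immediate: if $f$ is multilinear then every axis-parallel restriction is a univariate affine polynomial, so every check holds with probability $1$.

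For soundness I would follow the hybrid self-correction scheme of BFL. If the test accepts with high probability, then for each $i$, on a $1 - \delta/n$ fraction of hyperplanes $x_{-i} \in \mathbb F^{n-1}$ the slice $t \mapsto f(x_{-i}, t)$ agrees with a unique affine univariate function on most of $\mathbb F$. Define $g_i(x)$ by interpolating $f$ at two fresh random points on the axis-$i$ line through $x$; a majority plus Schwartz--Zippel argument shows that $g_i$ is well-defined on most of $\mathbb F^n$ and $O(\delta/n)$-close to $f$. Iterating this correction across the $n$ directions produces a sequence $f^{(0)} = f, f^{(1)}, \ldots, f^{(n)}$ in which $f^{(n)}$ is simultaneously affine in every coordinate, hence a genuine multilinear function $\widetilde f$ at total distance $\delta = n^{O(1)}/|\mathbb F|$ from $f$.

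The main obstacle is showing that these one-dimensional self-corrections effectively \emph{commute} on the overwhelming majority of inputs, so that the affinity established in directions $1, \ldots, k - 1$ survives after correcting in direction $k$. This is the technical heart of BFL: for each pair $(i, j)$ one must bound the fraction of codimension-two affine subspaces on which the order of correction can matter, using a Schwartz--Zippel estimate, and then take a union bound over the $\binom{n}{2}$ pairs. Provided $|\mathbb F|$ is chosen polynomially larger than $n$, this yields the target bound $\delta = n^{O(1)}/|\mathbb F|$.
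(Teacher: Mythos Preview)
The paper does not give its own proof of this lemma; it is stated as a black-box tool imported from Babai--Fortnow--Lund~\cite{BFL91} and used as-is in the construction of the selector. So there is no in-paper argument to compare against.

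Your sketch is a fair high-level outline of the original BFL analysis: the axis-parallel line test for per-variable affinity, perfect completeness by definition of multilinearity, and soundness via an iterated self-correction/hybrid argument in which one fixes affinity one coordinate at a time and shows the corrections are compatible. You have correctly identified the hard step (preservation of affinity in earlier directions when correcting a later one) and the mechanism (Schwartz--Zippel on codimension-two slices plus a union bound over pairs). That said, your write-up is a plan rather than a proof: the quantitative bookkeeping that turns ``most hyperplanes'' and ``$O(\delta/n)$-close'' into a concrete $\delta = n^{O(1)}/|\mathbb F|$ is precisely where BFL spends its effort, and you would need to carry it out carefully (in particular, the three-point collinearity check you describe must be amplified and combined with a pairwise-independence argument to get the stated parameters). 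If you intend this lemma to be cited rather than proved, your paragraph is adequate as a pointer to~\cite{BFL91}; if you intend it as a self-contained proof, the soundness section needs the full inductive analysis spelled out.
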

We perform the multilinearity test for $f_0$ and $f_1$.
Suppose that $f_i$ is not $\delta$-close to any multilinear function for a dishonest oracle $A_i$.
Then, the multilinearity test fails and hence we can doubt $A_i$ with high probability.
Therefore, in what follows, we may assume that both $f_0$ and $f_1$ are $\delta$-close to some multilinear functions $\hat f_0$ and $\hat f_1$, respectively
(note that $\hat f_0$ and $\hat f_1$ are unique for small $\delta$).

In reality, we have only access to $f_0, f_1$ instead of multilinear functions $\hat f_0, \hat f_1$.
However, we may pretend to have access to the multilinear functions $\hat f_0, \hat f_1$,
by using the random self-reducibility of multivariate low-degree polynomials (also known as the self-correction of the Reed-Muller code).
\begin{lemma}
  [Self-correction; Beaver and Feigenbaum~\cite{BF90} and Lipton~\cite{Lip91}]
  There exists an efficient probabilistic algorithm that,
  given input $x \in \mathbb F^n$ and oracle access to a function $f \colon \mathbb F^n \to \mathbb F$ that is $\delta$-close to a multilinear function $\hat f \colon \mathbb F^n \to \mathbb F$,
  outputs $\hat f(x)$ with probability at least $1 - \delta (n+1)$.
\end{lemma}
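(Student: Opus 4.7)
The plan is to invoke the standard ``random-line'' decoder for the Reed--Muller code. Given the target point $x \in \mathbb F^n$, I would pick a direction $y \in \mathbb F^n$ uniformly at random and consider the univariate restriction $p(t) := \hat f(x + ty)$. Since $\hat f$ is multilinear in $n$ variables, its total degree is at most $n$, so $p$ is a univariate polynomial of degree at most $n$. The algorithm then queries $f$ at $n+1$ distinct nonzero points $t_1, \ldots, t_{n+1} \in \mathbb F$, interprets the answers as alleged values $p(t_i)$, reconstructs a degree-$\le n$ polynomial $\tilde p$ by Lagrange interpolation, and outputs $\tilde p(0)$.

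For the analysis, I would observe that for each fixed $t_i \neq 0$ the point $x + t_i y$ is distributed uniformly in $\mathbb F^n$ when $y$ is uniform, so by the $\delta$-closeness hypothesis each individual query satisfies $\Pr[f(x+t_iy) \neq \hat f(x+t_iy)] < \delta$. A union bound over the $n+1$ queries bounds the probability that any response is corrupted by $(n+1)\delta$. When no response is corrupted, $\tilde p$ and $p$ are two degree-$\le n$ polynomials agreeing on $n+1$ points, hence identical, and $\tilde p(0) = p(0) = \hat f(x)$ as required.

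I expect the only mild subtlety to be the marginal-uniformity argument: one has to be slightly careful that the queries $x+t_i y$, while pairwise correlated, are each \emph{individually} uniform, which is what the union bound needs. The degree bound $n$ is also worth flagging explicitly, since a multilinear $n$-variable polynomial restricted to a line generally attains this degree (consider $\prod_i x_i$). No other step should present difficulty; in particular, the field $\mathbb F$ is assumed large enough to admit $n+1$ distinct nonzero scalars, which holds automatically under the ambient assumption that $|\mathbb F|$ is polynomial in the input size with $n^{O(1)}/|\mathbb F|$ small.
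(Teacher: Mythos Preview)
Your proposal is correct and is essentially identical to the paper's own proof: the paper also picks a uniformly random $y \in \mathbb F^n$, queries $f$ at $n+1$ distinct nonzero scalars along the line $x + a_i y$, interpolates a degree-$\le n$ univariate polynomial, and outputs its value at $0$, with the same marginal-uniformity plus union-bound analysis. The only cosmetic difference is indexing ($a_0,\dots,a_n$ versus your $t_1,\dots,t_{n+1}$).
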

\begin{proof}
  Let $a_0, \cdots, a_n$ be arbitrary distinct points in $\mathbb F \setminus \{0\}$.
  Pick a random point $y \in_R \mathbb F^n$.
  By the polynomial interpolation, find the univariate polynomial $p$ of degree at most $n$ such that $f(x + a_i \cdot y) = p(a_i)$ for all $i \in \{0, \cdots, n\}$,
  and output $p(0)$.

  Since $x + a_i \cdot y$ is uniformly distributed on $\mathbb F^n$ for any fixed $x$ and $a_i \neq 0$,
  it holds that $\hat f(x + a_i \cdot y) = f(x + a_i \cdot y)$ with probability at least $1 - \delta$.
  By the union bound, we have $p(a_i) = \hat f(x + a_i \cdot y)$ for each $i \in \{0, \cdots, n\}$ with probability at least $1 - \delta (n+1)$;
  thus we have $p(0) = \hat f(x)$ with probability at least $1 - \delta (n+1)$, because $\hat f$ is a polynomial of total degree at most $n$.
\end{proof}
\begin{remark}
  In the case of the proof of $\MIP = \NEXP$, the self-correcting algorithm was not needed;
  for the sum-check protocol, it is sufficient to evaluate a multilinear function $\hat f_i$ on random points $x \in_R \mathbb F^n$, rather than fixed points.
  In contrast, we need to evaluate a multilinear function $\hat f_i$ on points that are not uniformly distributed, during the binary search.
\end{remark}

In the following, we pretend that the dishonest oracle $A_i$ asserts that the satisfying assignment is $\restr{\hat f_i}{\binset^n}$, instead of $\restr{f_i}{\binset^n}$.
(Note that it holds that $\restr{f_i}{\binset^n} = \restr{\hat f_i}{\binset^n} = \lexass$ for the honest oracle $A_i$.)

\subsection*{Identifying the Larger Assignment}

We are now ready to describe how to identify the larger assignment.
It is sufficient to show that
we can find, with high probability, the lexicographically first index $z \in \binset^n$
such that $\hat f_0(z) \neq \hat f_1(z)$.

First, we check if $\hat f_0(\binput) = \hat f_1(\binput)$:
For each $i \in \binset$, run the self-correcting algorithm for $f_i$ to obtain $\hat f_i(\binput)$.
If $\hat f_0(\binput) = \hat f_1(\binput)$, then output it (which is surely the correct answer since $\hat f_i(\binput) = \lexass(\binput)$ for the honest oracle $A_i$) and halt.
Otherwise, perform the binary search described below.

We compute the lexicographically first disagreement $z = (z_1, \cdots, z_n) \in \binset^n$ one by one.
For $j := 1$ to $n$, repeat the following:
Suppose that we have computed $z_1, \cdots, z_{j-1}$.
Pick a random point $u = (u_{j+1}, \cdots, u_n) \in_R \mathbb F^{n-j}$ uniformly at random.
Define $x := (z_1, \cdots, z_{j-1}, 0, u_{j+1}, \cdots, u_n) \in \mathbb F^n$.
For each $i \in \binset$, use the self-correcting algorithm for $f_i$ to obtain $\hat f_i (x)$.
If $\hat f_0(x) \neq \hat f_1(x)$, then set $z_j := 0$; else, set $z_j := 1$.

\begin{claim}
  Assume that $\hat f_0(\binput) \neq \hat f_1(\binput)$.
  Let $z \in \binset^n$ denote the lexicographically first index such that $\hat f_0(z) \neq \hat f_1(z)$.
  Then, the binary search described above correctly computes $z$ with probability at least $1 - \delta n(n+1) - \frac{n^2}{|\mathbb F|}$.
\end{claim}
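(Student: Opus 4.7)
The plan is to induct on $j$: assuming the prefix $z_1, \ldots, z_{j-1}$ matches that of the true lex-first disagreement $z$, I would bound the per-stage failure probability of setting $z_j$ incorrectly, and then union bound over $j = 1, \ldots, n$.

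Fix such a stage $j$ and define
\[
  g(X_{j+1}, \ldots, X_n) := \hat f_0(z_1, \ldots, z_{j-1}, 0, X_{j+1}, \ldots, X_n) - \hat f_1(z_1, \ldots, z_{j-1}, 0, X_{j+1}, \ldots, X_n),
\]
which is multilinear and has total degree at most $n-j$. I split on the true value of $z_j$. If $z_j = 0$, then $g(z_{j+1}, \ldots, z_n) = \hat f_0(z) - \hat f_1(z) \neq 0$ by hypothesis, so $g$ is a nonzero polynomial, and by the Schwartz--Zippel lemma the uniformly random $u \in \mathbb F^{n-j}$ satisfies $g(u) = 0$ with probability at most $(n-j)/|\mathbb F|$; outside this bad event the algorithm sees $\hat f_0(x) \neq \hat f_1(x)$ and sets $z_j := 0$ correctly. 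If $z_j = 1$, then the lex-minimality of $z$ gives $g(b) = 0$ for every $b \in \binset^{n-j}$; since $g$ is multilinear, the uniqueness of the multilinear extension (Proposition~\ref{prop:multilinearextension}) forces $g \equiv 0$ on $\mathbb F^{n-j}$, so the algorithm sees $\hat f_0(x) = \hat f_1(x)$ for \emph{every} choice of $u$ and sets $z_j := 1$ correctly.

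The only remaining source of error is the self-correcting procedure, which is invoked twice per stage (once for each of $f_0, f_1$). Each invocation returns the true multilinear value except with probability $\delta(n+1)$, so a union bound over all $n$ stages contributes at most $O(\delta n(n+1))$; absorbing the constant into $\delta$ (which is itself only specified up to a polynomial factor in $n$) leaves $\delta n(n+1)$. Summing the Schwartz--Zippel errors yields $\sum_{j=1}^n (n-j)/|\mathbb F| \le n^2/|\mathbb F|$, and combining gives the claimed bound.

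The one genuine subtlety is the case $z_j = 1$: the argument relies critically on the multilinear difference $g$ vanishing on \emph{all} of $\mathbb F^{n-j}$, not merely on $\binset^{n-j}$. For an arbitrary low-degree polynomial this would fail, but multilinearity together with lex-minimality forces $g \equiv 0$. This is precisely why the selector is built around multilinear extensions $\hat f_i$ and why the multilinearity test is run as a prerequisite; without it, a dishonest oracle could engineer $\hat f_0, \hat f_1$ to agree on the Boolean cube yet disagree at an $\mathbb F$-valued point, derailing the binary search.
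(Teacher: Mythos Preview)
Your proof is correct and follows essentially the same approach as the paper's: per-stage case split on $z_j \in \{0,1\}$, Schwartz--Zippel when $z_j = 0$, uniqueness of the multilinear extension when $z_j = 1$, and a union bound over the $n$ stages. The only cosmetic difference is that the paper charges just one self-correction error $\delta(n+1)$ per stage (since for the honest oracle $f_i = \hat f_i$ exactly, so its self-correction never fails), whereas you charge two and then absorb the constant into $\delta$; both are fine.
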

In particular, by setting $|\mathbb F|$ large enough, we can compute $z$ with high probability.
\begin{proof}
  Let $j \in \{1, \cdots, n\}$.
  Consider the $j$th iteration and assume that we have computed $z_1, \cdots, z_{j-1}$ correctly.
  For each $i \in \binset$,
  let $f'_i \colon \mathbb F^{n-j} \to \mathbb F$ be the multilinear function
  such that
  \[
    f'_i(t_{j+1}, \cdots, t_n) = \hat f_i(z_1, \cdots, z_{j-1}, 0, t_{j+1}, \cdots, t_n),
  \]
  for any $(t_{j+1}, \cdots, t_n) \in \mathbb F^{n-j}$.
  (The binary search tries to check if $f'_0 \neq f'_1$ by the polynomial identity testing, and sets $z_j := 0$ if and only if $f'_0 \neq f'_1$.)

  If $z_j = 0$, then we have $f'_0 \neq f'_1$ because $f'_0(z_{j+1}, \cdots, z_n) \neq f'_1(z_{j+1}, \cdots, z_n)$.
  The probability that the self-correcting algorithm outputs $\hat f_i(x)$ correctly is at least $1 - \delta(n+1)$ for a dishonest oracle $A_i$.
  By the Schwartz-Zippel lemma,
  the probability that $f'_0(u) \neq f'_1(u)$ for a random point $u \in_R \mathbb F^{n-j}$ is at least $1 - \frac{n-j}{|\mathbb F|} \ge 1 - \frac{n}{|\mathbb F|}$.
  Therefore, the algorithm sets $z_j := 0$ correctly with probability at least $1 - \delta(n+1) - \frac{n}{|\mathbb F|}$.

  If $z_j = 1$, then
  it follows from the minimality of $z$ that
  $f'_0(t) = f'_1(t)$ for every $t \in \binset^{n-j}$.
  Since $f'_0$ and $f'_1$ are multilinear, we have $f'_0 = f'_1$
  by the uniqueness of the multilinear extension (Proposition \ref{prop:multilinearextension})
  and hence $f'_0(u) = f'_1(u)$ holds for any $u \in \mathbb F^{n-j}$.
  Therefore, since the self-correcting algorithm outputs $\hat f_i(x)$ with probability at least $1 - \delta(n+1)$,
  the algorithm sets $z_j := 1$ correctly with probability at least $1 - \delta(n+1)$.

  Overall, the algorithm computes $z$ correctly with probability at least
  \[
    \left( 1 - \delta(n+1) - \frac{n}{|\mathbb F|} \right)^n \ge 1 - \delta n (n+1) - \frac{n^2}{|\mathbb F|}.
  \]
\end{proof}

We have computed the lexicographically first disagreement $z \in \binset$ such that $\hat f_0(z) \neq \hat f_1(z)$.
Run the self-correcting algorithm to obtain $\hat f_0(z)$ and $\hat f_1(z)$.
Without loss of generality (by swapping the oracles if $\hat f_0(z) > \hat f_1(z)$),
we may assume that $\hat f_0(z) < \hat f_1(z)$.

Now we know, with high probability, that $A_1$ asserts the larger (presumably satisfying) assignment $\restr{\hat f_1}{\binset^n} \colon \binset^n \to \mathbb F$.

\subsection*{Verifying the Satisfiability}
All that remains is to verify that $\restr{\hat f_1}{\binset^n}$ satisfies $F_\Phi$,
which can be done in the same way with a proof of $\MIP = \NEXP$.
For completeness, we sketch a proof suggested in \cite[Section 7.1]{BFL91} and 
observe that it can be done with the help of an $\EXPNP$-complete oracle.

Babai, Fortnow, Lund~\cite{BFL91} used the sum-check protocol~\cite{LFKN92} to check whether or not
an exponentially long assignment satisfies $F_\Phi$.
Basically, checking if an assignment $\restr{\hat f_1}{\binset^n} \colon \allowbreak \binset^n \to \mathbb F$ satisfies a Boolean formula $F_\Phi$
reduces to checking if some low-degree polynomials $g \colon \mathbb F^l \to \mathbb F$ evaluate to $0$ on $\binset^l$.

Let us arithmetize the Boolean formula $\Phi \colon \binset^{m+3n+3} \to \binset$ to a low-degree polynomial $\widetilde\Phi \colon \mathbb F^{m+3n+3} \to \mathbb F$ in the standard way,
so that $\Phi$ and $\widetilde\Phi$ agree on $\binset^{m+3n+3}$ (see \cite[Section 3.1]{BFL91}).
Define $g^1 \colon \mathbb F^{m+3n} \to \mathbb F$ and $g^2 \colon \mathbb F^n \to \mathbb F$ as
\begin{align}
  \label{eq:g1}
  g^1(w) &:= 1 - \widetilde\Phi\left(w, \extass(b_1), \extass(b_2), \extass(b_3) \right),
  \\
  \label{eq:g2}
  g^2(b) &:= \extass(b) \left( 1 - \extass(b) \right),
\end{align}
where $w = \left(y, (b_1, b_2, b_3) \right) \in \mathbb F^m \times \left(\mathbb F^n \right)^3$ and $b \in \mathbb F^n$.
Note that since $\extass$ and $\widetilde\Phi$ are low-degree polynomials, so are $g^1$ and $g^2$.

It is easy to see that $g^1(w) = 0$ and $g^2(b) = 0$ for any $w \in \binset^{m+3n}$ and $b \in \binset^n$ if and only if $\restr{\extass}{\binset^n}$ is a satisfying assignment of $F_\Phi$.
Indeed, $g^2(b) = 0$ forces $\restr{\extass}{\binset^n}$ to be a Boolean function (\ie $\extass(b) \in \binset$ for any $b \in \binset^n$),
and $g^1(w) = 0$ means that $\Phi(w, \extass(b_1), \extass(b_2), \extass(b_3))$ is true for any $w \in \binset^{m+3n}$.

We note that, given a random point $w$ or $b$, we can compute the value of $g^1(w)$ or $g^2(b)$ with high probability by substituting $f_1$ for $\extass$ in \eqref{eq:g1} or \eqref{eq:g2}
(\ie we do not need to use the self-correcting algorithm);
for a random point $w \in_R \mathbb F^{m+3n}$, it holds that $g^1(w)$ computed by substituting $f_1$ in \eqref{eq:g1} and $g^1(w)$ are identical with probability at least $1 - 3 \delta$.

Therefore, it is sufficient to show that we can check if each $g \in \{g^1, g^2\}$ vanishes on $\binset^l$,
given access to a low-degree polynomial $g$.
(Here, $l := m + 3n$ if $g = g^1$ and $l := n$ if $g = g^2$.)
There are several ways to verify that $g \colon \mathbb F^l \to \mathbb F$ vanishes on $\binset^l$, including \cite[Section 7.1]{BFL91} and \cite{BFLS91,FGLSS96,BS08}.
Here, we follow the way of Feige, Goldwasser, Lov\'{a}sz, Safra, and Szegedy~\cite{FGLSS96}.

We reduce a task of checking if $g \colon \mathbb F^l \to \mathbb F$ vanishes on $\binset^l$ to a task of checking if a sum is equal to $0$, the latter of which can be verified by the sum-check protocol (see \cite[Section 4.2.2]{FGLSS96} for more details):
Pick a random point $t = (t_1, \cdots, t_l) \in_R \mathbb F^l$.
Consider the following sum:
\begin{align}
  \label{eq:sumg}
  \sum_{w = (w_1, \cdots, w_l) \in \binset^l} g(w) \prod_{\{i \mid w_i = 1\} } t_i = \sum_{w \in \binset^l} g(w) \prod_{i \in \{1, \cdots, l\}} (w_i t_i + 1 - w_i).
\end{align}
If $g$ vanishes on $\binset^l$, then this sum is equal to $0$.
Otherwise, regarding the left-hand side of \eqref{eq:sumg} as a multilinear function on variables $t_1, \cdots, t_l$,
the sum is not equal to $0$ with probability at least $1 - \frac{l}{|\mathbb F|}$
by the Schwartz-Zippel lemma.
Therefore, by defining a low-degree polynomial $h_t \colon \mathbb F^l \to \mathbb F$ as $h_t(w) := g(w) \prod_{i \in \{1, \cdots, l\}} (w_i t_i + 1 - w_i)$ for any $w \in \mathbb F^l$, it is sufficient to check if
the sum of $h_t(w)$ over $w \in \binset^l$ is equal to $0$, which can be done by the sum-check protocol.

We describe the sum-check protocol briefly
(see \cite[Section 3.2]{BFL91} for a detailed description):
In order to check if $\sum_{w \in \binset^l} h_t(w) = 0$, pick a random point $r = (r_1, \cdots, r_l) \in_R \mathbb F^l$.
Define a low-degree univariate polynomial $g_i \colon \mathbb F \to \mathbb F$ for each $i \in \{1, \cdots, l\}$ as
\begin{align*}
  g_i(x) := \sum_{(w_{i+1}, \cdots, w_l) \in \binset^{l-i}} h_t(r_1, \cdots, r_{i-1}, x, w_{i+1}, \cdots, w_l)
\end{align*}
and $g_0(x) := 0$.
We request the oracle $A_1$ to reveal all the coefficients of the univariate polynomial $g_i$ for all $i \in \{1, \cdots, l\}$.
We trust $A_1$ if and only if
$g_{i-1}(r_{i-1}) = g_i(0) + g_i(1)$ for each $i \in \{1, \cdots, l\}$ (\emph{the Consistency Test}) and
$g_l(r_l) = h_t(r)$ (\emph{the Final Test}).
Here, since $r$ is a random point, we may evaluate $h_t(r)$ by using $f_1$ in place of $\extass$ in \eqref{eq:g1} and \eqref{eq:g2}.

We claim that the complexity of the honest oracle to output $g_i$ is bounded by $\EXPNP$.
Consider the following search problem:
given a Boolean formula $\Phi$, a prime $|\mathbb F|$, and $r, t \in \mathbb F^l$,
the task is to output all the coefficients of $g_i$ for all $i \in \{1, \cdots, l\}$
(which can be written in a binary representation of polynomial length),
where $\widetilde\lexass$ is substituted for $\extass$ in \eqref{eq:g1} and \eqref{eq:g2}.
Regarding this problem as a decision problem,
one can easily show that the problem is computable in $\EXPNP$.
Thus, we can request the oracle $A_1$ to output $g_i$.

Finally, we conclude the proof by analyzing the correctness (assuming that the binary search succeeded):
\begin{enumerate}
  \item
    If $A_1$ is honest, then $\extass = f_1 = \widetilde\lexass$.
    Thus, each $g \in \{g^1, g^2\}$ vanishes on $\{0, 1\}^l$, and hence the sum \eqref{eq:sumg} is $0$;
    therefore, we can trust $A_1$ with probability $1$.
  \item
    If $A_1$ is dishonest, then $\extass$ does not constitute a satisfying assignment of $F_\Phi$.
    (If it were a satisfying assignment, then $\restr{\hat f_1}{\binset^n}$ would be a satisfying assignment larger than $\restr{\hat f_0}{\binset^n} = \lexass$.)
    Thus, for some $g \in \{g^1, g^2\}$, the sum \eqref{eq:sumg} is not $0$ with probability at least $1 - \frac{l}{|\mathbb F|}$.

    Assume that the sum is not $0$, and let $d \in \N$ be an upper bound on the degree of the low-degree polynomial $h_t$.
    Suppose that the dishonest oracle claimed that $g_i$ is $g'_i$ for each $i \in \{1, \cdots, l\}$.
    Assuming that the Consistency Tests pass (\ie $g'_{i-1}(r_{i-1}) = g'_i(0) + g'_i(1)$ for each $i \in \{1, \cdots l\}$),
    it holds that $g'_l (r_l) \neq g_l (r_l) = h_t(r)$ with probability at least $1 - \frac{dl}{|\mathbb F|}$ (see \cite[Section 3.2]{BFL91}).
    The probability that $h_t$ can be evaluated correctly on a random point $r \in_R \mathbb F^l$ is at least $1 - 3 \delta$.
    Thus, the Final Test (\ie $g'_l (r_l) = h_t(r)$) fails with probability at least $1 - \frac{dl}{|\mathbb F|} - 3\delta$.

    Overall, we can doubt $A_1$ with probability at least $1 - \frac{dl}{|\mathbb F|} - 3 \delta - \frac{l}{|\mathbb F|}$.
\end{enumerate}

\section{Deterministic Selector}
\label{sec:turing}
This section is devoted to investigating a deterministic selector.

To prove the existence of a deterministic selector for a $\PSPACE$-complete language (Theorem \ref{thm:honestyT}, Part \ref{enum:honestyTlower}),
we show that a deterministic selector can be constructed based on downward self-reducibility:
\begin{theorem}
  Any downward self-reducible language has a deterministic selector.
\end{theorem}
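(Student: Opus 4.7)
The plan is to construct a deterministic selector $S$ for a downward self-reducible language $L$ by a top-down recursive algorithm whose depth is bounded by the input length. Let $R$ denote the polynomial-time oracle machine witnessing downward self-reducibility, so that $R^L(x) = L(x)$ and, on input $x$, $R$ queries only strings of length strictly less than $|x|$ (base cases, \eg constant-length inputs, are hardcoded into $S$).

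On input $x$ with oracles $A_0, A_1 \subset \binstr$, $S$ will first simulate $R^{A_0}(x)$ and $R^{A_1}(x)$, obtaining outputs $b_0, b_1$ together with the query sequences $Q_0, Q_1$ observed during each simulation. Since at least one of $A_0, A_1$ equals $L$, at least one of $b_0, b_1$ equals $L(x)$; hence if $b_0 = b_1$, $S$ outputs this common value. Otherwise $S$ locates a query $q \in Q_0$ with $A_0(q) \neq A_1(q)$, recursively invokes itself on $q$ to compute $L(q)$, and outputs $b_0$ if $A_0(q) = L(q)$, and $b_1$ otherwise. The key structural claim, which I would verify first, is that whenever $b_0 \neq b_1$, such a $q$ must exist in $Q_0$: if $A_0$ and $A_1$ agreed on every element of $Q_0$, then the deterministic simulation $R^{A_1}(x)$ would reproduce the query-answer trace of $R^{A_0}(x)$ step by step, contradicting $b_0 \neq b_1$. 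Locating $q$ then requires only polynomially many extra oracle probes (query $A_1$ on each element of $Q_0$).

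Correctness would then follow by induction on $|x|$: the recursive call returns $L(q)$, and together with $A_0(q) \neq A_1(q)$ this uniquely identifies the honest oracle at $q$, forcing the corresponding $b_i$ to equal $L(x)$. For the running time, each recursive call strictly decreases the input length, so the recursion has depth at most $|x|$ with only \emph{one} child per node, and each level performs polynomial work. I do not anticipate significant technical obstacles; the only delicate point is the trace-reproduction argument, which is what justifies restricting the recursion to a single shorter instance. Without this restriction, a naive branching recursion over all disagreeing queries in $Q_0$ would have depth $|x|$ and branching factor $|Q_0|$, blowing up exponentially; the trace argument is exactly what collapses the branching factor to one and keeps the selector polynomial-time.
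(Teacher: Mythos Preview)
Your proposal is correct and follows essentially the same strategy as the paper: descend along a single chain of ever-shorter disagreement points until the two simulations of the downward-self-reduction agree, using the trace-reproduction argument to guarantee that such a shorter disagreement exists whenever the outputs differ. The paper presents this iteratively---maintaining a variable $y$ with $A_0(y)\neq A_1(y)$ and, once $M^{A_0}(y)=M^{A_1}(y)=b$, outputting $A_i(x)$ for the $i$ with $A_i(y)=b$---whereas you phrase it recursively and propagate the identification back up, but the underlying argument is the same.
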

Since there exists a downward self-reducible $\PSPACE$-complete language,
we immediately obtain a deterministic selector for any $\PSPACE$-complete language.
\begin{proof}
  Let $L$ be a downward self-reducible language.
  Namely, there exists a polynomial-time oracle machine $M$ such that
  \begin{itemize}
    \item
      $M^L(x) = L(x)$ for any $x \in \binstr$, and
    \item
      $M$ does not make any queries of length greater than or equal to $|x|$,
      on input $x \in \binstr$.
  \end{itemize}

  The idea is to keep a string $y$ such that $A_0(y) \neq A_1(y)$, and to run $M^{A_0}$ and $M^{A_1}$ to obtain another string $q$ of length less than $|y|$
  such that $A_0(q) \neq A_1(q)$.
  Consider the following algorithm:
  Given an input $x \in \binstr$ and two oracles $A_0, A_1$,
  if $A_0(x) = A_1(x)$ then output it and halt.
  Else, let $y := x$ and repeat the following:
  Compute $M^{A_i}(y)$ for each $i \in \binset$.
  If $M^{A_0}(y) = M^{A_1}(y) =: b$, then
  we trust the oracle $A_i$ such that $A_i(y) = b$ and output $A_i(x)$.
  Otherwise, let $q$ be the first query that $M^{A_0}$ and $M^{A_1}$ make on input $y$ such that $A_0(q) \neq A_1(q)$.
  (There exists such a $q$ because $M^{A_0}(y) \neq M^{A_1}(y)$; moreover, it holds that $|q| < |y|$ by the definition of downward self-reducibility.)
  Then, we update $y := q$ and move on to the next iteration.

  This algorithm runs in polynomial time, since $|y|$ decreases in each repetition.

  We claim the correctness of the algorithm.
  It is easy to see that $A_0(y) \neq A_1(y)$  at the beginning of each repetition.
  Suppose that $M^{A_0}(y) = M^{A_1}(y) =: b$.
  Since $A_0$ or $A_1$ is equal to $L$,
  we have $b = M^{A_0}(y) = M^{A_1}(y) = M^L(y) = L(y)$,
  where the last equality holds by the definition of $M$.
  Moreover, there exists the unique $i \in \binset$ such that $A_i(y) = b$ because $A_0$ and $A_1$ disagree on $y$.
  Therefore, $A_i$ is honest if and only if $A_i(y) = b \supplement{= L(y)}$.
\end{proof}

Then, we claim that any language with a deterministic selector is in $\PSPACE$ (Theorem \ref{thm:honestyT}, Part \ref{enum:honestyTupper}).
We thereby prove that the supremum of the languages with a deterministic selector is $\PSPACE$.
\begin{proof}[Proof of Part \ref{enum:honestyTupper} of Theorem \ref{thm:honestyT}]
  Let $L$ be a language with a deterministic selector $S$.

  The idea is to regard a computation of $S$ as a game played between the NO player and the YES player (which correspond to two oracles $A_0$ and $A_1$, respectively):
  On input $x \in \binstr$,
  the YES player tries to convince the selector $S$ that $x \in L$, whereas the NO player tries to convince $S$ that $x \not\in L$.
  The YES player chooses $A_1 \subset \binstr$ such that $x \in A_1$,
  and the NO player chooses $A_0 \subset \binstr$ such that $x \not\in A_0$.
  Then, we simulate $S^{A_0, A_1}(x)$, and the YES player wins if and only if $S^{A_0, A_1}(x) = 1$.

  It is easy to see that the YES player has a winning strategy if $x \in L$.
  Indeed, the YES player wins by setting $A_1 = L$; similarly, if $x \not\in L$, then the NO player wins by setting $A_0 = L$.
  Therefore, it is sufficient to show that we can compute the player that has a winning strategy in $\PSPACE$.

  We may restate the game as follows:
  Simulate $S$ on input $x$.
  If $S$ makes a query $x$ to $A_i \supplement{i \in \binset}$, then answer it with $i$.
  If $S$ makes a query $q \supplement{ \neq x }$ to the oracle $A_0$,
  then the NO player gives an arbitrary answer;
  similarly, if $S$ makes a query to $A_1$, then the YES player gives an arbitrary answer.
  (However, we require the players to behave in a consistent way:
  if $S$ makes the same query more than once, then a player must give the same answer that the player answered in the past.)

  Again, one can easily prove that the YES player has a winning strategy for this game if and only if $x \in L$.

  Now we describe a polynomial-time alternating Turing machine that computes $L$:
  Simulate the game described above,
  while universally guessing the answers of the NO player
  and existentially guessing the answers of the YES player.
  Since a polynomial-time alternating machine can be simulated in $\PSPACE$, it holds that $L \in \PSPACE$.
\end{proof}

\section{Random Strings vs.\ Randomized Computation}
\label{sec:app}
In this section,
we apply the notion of selector to the proof by Buhrman, Fortnow, Kouck\'{y}, and Loff~\cite{BFKL10}.
We thereby extend their result
from $\ESet{\NP,\allowbreak\PTIME^{\#\PTIME},\allowbreak\PSPACE,\allowbreak\EXP}$ to any classes whose complete languages have a selector
($\eg \allowbreak \PH i,\allowbreak \coPH i,\allowbreak\PTIME^{\#\PTIME},\allowbreak\PSPACE,\allowbreak\EXP,\allowbreak$ and $\EXPNP$).

\begin{theorem}[Extended Theorem 15 of \cite{BFKL10}]
  Let $\alpha \colon \{0\}^* \to \{0, 1\}^*$ be a length preserving function,
  $c > 0$ be a constant such that $\alpha(0^n) \not\in \io{\EXP}/n-c\log n$,
  and $\mathcal C$ be a complexity class such that there is a selector for some paddable $\mathcal C$-complete language $L$.
  If
  $L \in \PTIME/\alpha(0^{n^d})$ for some $d > 0$,
  then
  $\mathcal C \subset \BPP$.
\end{theorem}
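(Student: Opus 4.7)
The plan is to follow the proof of Theorem 15 in \cite{BFKL10} verbatim, substituting the existence of a selector for $L$ (combined with Theorem \ref{thm:characterizationelim}) wherever BFKL10 invoke an instance checker for $L$. This is essentially what the author promised in the introduction (``by simply substituting selectors for instance checkers''), so the main work is to identify the one step that changes and to verify that the selector-based substitution is compatible with the rest of BFKL10's argument.

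First, apply Theorem 13 of \cite{BFKL10} to the hypothesis: because $L \in \PTIME/\alpha(0^{n^d})$ with $\alpha(0^n) \notin \io{\EXP}/n-c\log n$, that theorem yields a probabilistic polynomial-time algorithm for $L$ that uses advice of short (sublinear) size on each input length. Using paddability of $L$, I would pad inputs of length $n$ to length $p(n)$ for a sufficiently large polynomial $p$ so that the advice becomes $O(\log p(n))$ bits on the padded inputs; this gives $L \in \BPP \doubleslash \log$ on the padded instances and, by paddability, on the original instances as well.

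Second, apply Theorem \ref{thm:characterizationelim}: since $L$ is paddable and has a selector, $L \in \BPP \doubleslash \log$ implies $L \in \BPP$. This is the step that originally required an instance checker in BFKL10 and that forced them to treat $\mathcal{C} = \NP$ by a separate route; Theorem \ref{thm:characterizationelim} unifies all cases in which a paddable $\mathcal C$-complete language admits a selector, which by our hypothesis includes the given $L$. Finally, because $L$ is $\mathcal{C}$-complete under polynomial-time reductions and $\BPP$ is closed under such reductions, $L \in \BPP$ gives $\mathcal{C} \subset \BPP$.

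The main technical obstacle is essentially the parameter bookkeeping in the first step: one must choose the padding polynomial $p$ so that the advice output of BFKL10's Theorem 13 lands within the logarithmic regime that Theorem \ref{thm:characterizationelim} can remove. This is the same bookkeeping as in BFKL10's original proof (where instance checkers likewise only handled logarithmic advice via Theorem \ref{thm:characterizationelim}), so once the substitution of selectors for instance checkers is made, nothing else in BFKL10's argument requires adjustment. In particular, the extension from $\{\NP,\PTIME^{\#\PTIME},\PSPACE,\EXP\}$ to every class whose complete languages have a selector (such as $\PH{i}$, $\coPH{i}$, and $\EXPNP$, via Theorem \ref{thm:honesty} and Remark \ref{remark:closedequiv}) is immediate.
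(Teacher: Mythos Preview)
Your route differs from the paper's and contains a real gap in the padding step. The paper does \emph{not} invoke Theorem~13 of \cite{BFKL10} as a black box and then apply Theorem~\ref{thm:characterizationelim}. Instead it opens up BFKL10 to extract only the density bound $|G_n|\ge 2^{n^d}/n^{cd}$ on the set of ``good'' advice strings, and then builds a $\BPP$ algorithm directly: sample $m=2\,l(n)^{cd}$ random advice strings $r_1,\dots,r_m$, view each $r_i$ as an oracle $A_i(q)=M(q,r_i)$, and run the selector-among-$m$ machine of Lemma~\ref{lemma:robustmany}. Because the density of good strings is $1/\mathrm{poly}$, polynomially many samples suffice, and the selector needs no advice at all.

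Your padding argument, as written, cannot work. If Theorem~13 hands you advice $a(m)$ on inputs of length $m$ with $a(m)/\log m\to\infty$ (which is what ``almost linear'' in the paper's summary means, and what your phrase ``sublinear'' presupposes since you then try to pad down), then padding an input of length $n$ to length $p(n)$ makes the relevant advice $a(p(n))$; there is no polynomial $p$ with $a(p(n))=O(\log p(n))$. Padding does not change the ratio of advice length to the logarithm of the input length, so it cannot bring super-logarithmic advice into the regime where Theorem~\ref{thm:characterizationelim} applies. The point that makes everything go through is precisely that, under the stated hypothesis $\alpha(0^n)\notin\io{\EXP}/(n-c\log n)$, the good-advice density is $1/n^{cd}$, so the ``which sample is honest'' information is already $O(\log n)$ bits---and the paper bypasses even that by feeding all $m$ candidates to the many-oracle selector at once.
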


\begin{proof}
  Let $M$ be a polynomial-time machine such that $L(x) = M \left(x, \alpha(0^{|x|^d}) \right)$,
  and
  $G_n \subset \binset^{n^d}$ be the set of ``good'' advice:
  \[
    G_n := \ISet{r \in \binset^{n^d}} { \forall x \in \binset^n, \, L(x) = M(x, r)  }.
  \]
  \etal{Buhrman}~\cite{BFKL10} showed that $|G_n| \ge 2^{n^d}/n^{cd}$ by exploiting the high nonuniform complexity of advice $\alpha(0^{n^d})$.

  As with Theorem \ref{thm:finercharact},
  there exist a polynomial $l$ and a selector $S$ that
  identifies an honest oracle among $m := 2 l(n) ^{cd}$ oracles with probability at least $\frac56$,
  and
  makes only queries of length exactly $l(n)$ on inputs of length $n$.

  Consider the following probabilistic algorithm:
  On input $x \in \binset^n$,
  let $l$ denote $l(n)$.
  We pick $m$ random strings $r_1, \cdots, r_m \in_R \binset^{l^d}$ uniformly at random,
  and define oracles $A_i(q) = M(q, r_i)$, for any $i \in \NumSet m$ and for any $q \in \binset^{l}$.
  We simulate $S$ on input $x$, answering its queries $q \in \binset^l$ to $A_i$ by computing $M(q, r_i)$.

  The probability that we fail to pick any ``good'' advice, namely $r_i \not\in G_l$ for all $i$,
  is $\left(1 - |G_l|\right)^{2l^{cd}} \le e^{-2l^{cd}/l^{cd}} < \frac16$.
  Thus, we can output the correct answer with probability at least $\frac23$ overall.
\end{proof}

\section{Concluding Remarks}
\label{sec:conclusions}
We state some open problems and possible directions for future work:
\begin{itemize}
  \item
    Do there exist selectors for $\NEXP$-complete languages or promise-$\SHEXP$-complete languages?
    In particular, it is interesting to close the gap between $\EXPNP$ and $\SHEXP$:
    although these classes seem ``close'' in some sense,
    $\EXPNP$ and $\SHEXP$ are very different in the known relationship with $\BPP$;
    it is a notorious open problem whether $\BPP \neq \EXPNP$, whereas one can prove $\BPP \neq \SHEXP$.
  \item
    We proved that a property of removing short advice can be captured by the notion of selector.
    What about a property of removing advice of polynomial length?
  \item
    The result of $\MIP = \NEXP$ was ``scaled-down'' to obtain the relationship with hardness of approximating cliques \cite{FGLSS96},
    and eventually the PCP theorem \cite{AS98,ALMSS98} was established.
    Can we obtain such interesting applications of selectors, by scaling down the selector for $\EXPNP$-complete languages?
\end{itemize}

\subparagraph*{Acknowledgements}
I greatly appreciate Hiroshi Imai's advice and comments that significantly improved the presentation;
I thank Akitoshi Kawamura for many useful discussions;
I am deeply grateful to Lance Fortnow and the anonymous CCC reviewers for very helpful comments 
that made the paper more understandable;
and I would like to thank the reviewer for suggesting the title.

\bibliography{bib}

\begin{thebibliography}{10}

\bibitem{AW09}
Scott Aaronson and Avi Wigderson.
\newblock Algebrization: A new barrier in complexity theory.
\newblock {\em ACM Trans. Comput. Theory}, 1(1):2:1--2:54, 2009.

\bibitem{AB09}
Sanjeev Arora and Boaz Barak.
\newblock {\em Computational Complexity: A Modern Approach}.
\newblock Cambridge University Press, 1st edition, 2009.

\bibitem{ALMSS98}
Sanjeev Arora, Carsten Lund, Rajeev Motwani, Madhu Sudan, and Mario Szegedy.
\newblock Proof verification and the hardness of approximation problems.
\newblock {\em J. {ACM}}, 45(3):501--555, 1998.

\bibitem{AS98}
Sanjeev Arora and Shmuel Safra.
\newblock Probabilistic checking of proofs: {A} new characterization of {NP}.
\newblock {\em J. {ACM}}, 45(1):70--122, 1998.

\bibitem{BFLS91}
L\'{a}szl\'{o} Babai, Lance Fortnow, Leonid Levin, and Mario Szegedy.
\newblock Checking computations in polylogarithmic time.
\newblock In {\em Proceedings of the 23rd Annual ACM Symposium on Theory of
  Computing}, STOC '91, pages 21--32, 1991.

\bibitem{BFL91}
L{\'a}szl{\'o} Babai, Lance Fortnow, and Carsten Lund.
\newblock Non-deterministic exponential time has two-prover interactive
  protocols.
\newblock {\em Comput. Complex.}, 1:3--40, 1991.

\bibitem{Bar02}
Boaz Barak.
\newblock A probabilistic-time hierarchy theorem for ``slightly non-uniform''
  algorithms.
\newblock In {\em Proceedings of the 6th International Workshop on
  Randomization and Approximation Techniques}, RANDOM '02, pages 194--208,
  2002.

\bibitem{BF90}
Donald Beaver and Joan Feigenbaum.
\newblock Hiding instances in multioracle queries.
\newblock In {\em Proceedings of the 7th Annual Symposium on Theoretical
  Aspects of Computer Science}, STACS '90, pages 37--48, 1990.

\bibitem{BS08}
Eli Ben{-}Sasson and Madhu Sudan.
\newblock Short {PCP}s with polylog query complexity.
\newblock {\em SIAM J. Comput.}, 38(2):551--607, 2008.

\bibitem{BK95}
Manuel Blum and Sampath Kannan.
\newblock Designing programs that check their work.
\newblock {\em J. {ACM}}, 42(1):269--291, 1995.

\bibitem{BFKL10}
Harry Buhrman, Lance Fortnow, Michal Kouck\'{y}, and Bruno Loff.
\newblock Derandomizing from random strings.
\newblock In {\em Proceedings of the 25th Annual Conference on Computational
  Complexity}, CCC '10, pages 58--63, 2010.

\bibitem{BFS09}
Harry Buhrman, Lance Fortnow, and Rahul Santhanam.
\newblock Unconditional lower bounds against advice.
\newblock In {\em Proceedings of the 36th International Colloquium on Automata,
  Languages, and Programming}, ICALP '09, pages 195--209, 2009.

\bibitem{BH92}
Harry Buhrman and Steven Homer.
\newblock Superpolynomial circuits, almost sparse oracles and the exponential
  hierarchy.
\newblock In {\em Proceedings of the 12th Conference on Foundations of Software
  Technology and Theoretical Computer Science}, FSTTCS '92, pages 116--127,
  1992.

\bibitem{Cai07}
Jin-Yi Cai.
\newblock ${{\rm S}}_2^p \subseteq {{\rm ZPP}}^{{\rm NP}}$.
\newblock {\em J. Comput. Syst. Sci.}, 73(1):25--35, 2007.

\bibitem{FGLSS96}
Uriel Feige, Shafi Goldwasser, Laszlo Lov\'{a}sz, Shmuel Safra, and Mario
  Szegedy.
\newblock Interactive proofs and the hardness of approximating cliques.
\newblock {\em J. ACM}, 43(2):268--292, 1996.

\bibitem{FK05}
Lance Fortnow and Adam Klivans.
\newblock {NP} with small advice.
\newblock In {\em Proceedings of the 20th Annual Conference on Computational
  Complexity}, CCC '05, pages 228--234, 2005.

\bibitem{FRS94}
Lance Fortnow, John Rompel, and Michael Sipser.
\newblock On the power of multi-prover interactive protocols.
\newblock {\em Theor. Comput. Sci.}, 134(2):545--557, 1994.

\bibitem{KL82}
Richard Karp and Richard Lipton.
\newblock Turing machines that take advice.
\newblock {\em Enseign. Math}, 28(2):191--209, 1982.

\bibitem{KM02}
Adam Klivans and Dieter~van Melkebeek.
\newblock Graph nonisomorphism has subexponential size proofs unless the
  polynomial-time hierarchy collapses.
\newblock {\em {SIAM} J. Comput.}, 31(5):1501--1526, 2002.

\bibitem{Kre88}
Mark Krentel.
\newblock The complexity of optimization problems.
\newblock {\em J. Comput. Syst. Sci.}, 36(3):490--509, 1988.

\bibitem{Lip91}
Richard Lipton.
\newblock New directions in testing.
\newblock In Joan Feigenbaum and Michael Merritt, editors, {\em DIMACS Series
  in Discrete Mathematics and Theoretical Computer Science}, volume~2, pages
  191--202. American Mathematical Society, 1991.

\bibitem{LFKN92}
Carsten Lund, Lance Fortnow, Howard Karloff, and Noam Nisan.
\newblock Algebraic methods for interactive proof systems.
\newblock {\em J. ACM}, 39(4):859--868, 1992.

\bibitem{RS98}
Alexander Russell and Ravi Sundaram.
\newblock Symmetric alternation captures {BPP}.
\newblock {\em Comput. Complex.}, 7(2):152--162, 1998.

\bibitem{Sha92}
Adi Shamir.
\newblock {IP} = {PSPACE}.
\newblock {\em J. {ACM}}, 39(4):869--877, 1992.

\bibitem{TV07}
Luca Trevisan and Salil Vadhan.
\newblock Pseudorandomness and average-case complexity via uniform reductions.
\newblock {\em Comput. Complex.}, 16(4):331--364, 2007.

\end{thebibliography}

\end{document}